\newcommand{\diag}{\mathop{\mathrm{diag}}}  
\newtheorem{proposition}{Proposition}
\newtheorem{theorem}{Theorem}
\def\ie{$i.e.$}
\def\red#1{\textcolor{red}{#1}}
\begin{document}

\title{PointNCBW: Towards Dataset Ownership Verification for Point Clouds via Negative Clean-label Backdoor Watermark}

\author{
Cheng Wei, Yang Wang, Kuofeng Gao, Shuo Shao, Yiming Li, Zhibo Wang, and Zhan Qin

\thanks{The first two authors contributed equally to this paper.}
\thanks{Corresponding Author(s): Yiming Li and Zhan Qin.}
\thanks{
Cheng Wei, Yang Wang, Shuo Shao, Zhibo Wang, and Zhan Qin are with the State Key Laboratory of Blockchain and Data Security, Zhejiang University, Hangzhou 310007, China and also with Hangzhou High-Tech Zone (Binjiang) Institute of Blockchain and Data Security, Hangzhou 310053, China. (e-mail: \{weiccc, Kjchz, shaoshuo\_ss, zhibowang, qinzhan\}@zju.edu.cn)}
\thanks{
Yiming Li is now with College of Computing and Data Science, Nanyang Technological University, Singapore 639798. He was with the State Key Laboratory of Blockchain and Data Security, Zhejiang University, Hangzhou 310007, China. (e-mail: liyiming.tech@gmail.com)}
\thanks{Kuofeng Gao is with Tsinghua Shenzhen International Graduate School, Tsinghua University, Shenzhen 518055, China.
(e-mail: gkf21@mails.tsinghua.edu.cn).}}



\markboth{IEEE Transactions on Information Forensics and Security}%
{IEEE Transactions on Information Forensics and Security}


\maketitle

\begin{abstract}
Recently, point clouds have been widely used in computer vision, whereas their collection is time-consuming and expensive. As such, point cloud datasets are the valuable intellectual property of their owners and deserve protection. To detect and prevent unauthorized use of these datasets, especially for commercial or open-sourced ones that cannot be sold again or used commercially without permission, we intend to identify whether a suspicious third-party model is trained on our protected dataset under the black-box setting. We achieve this goal by designing a \emph{scalable} clean-label backdoor-based dataset watermark for point clouds that ensures both effectiveness and stealthiness. Unlike existing clean-label watermark schemes, which were susceptible to the number of categories, our method can watermark samples from all classes instead of only from the target one. Accordingly, it can still preserve high effectiveness even on large-scale datasets with many classes. Specifically, we perturb selected point clouds with non-target categories in both shape-wise and point-wise manners before inserting trigger patterns without changing their labels. The features of perturbed samples are similar to those of benign samples from the target class. As such, models trained on the watermarked dataset will have a distinctive yet stealthy backdoor behavior, $i.e.$, misclassifying samples from the target class whenever triggers appear, since the trained DNNs will treat the inserted trigger pattern as a signal to deny predicting the target label. We also design a hypothesis-test-guided dataset ownership verification based on the proposed watermark. Extensive experiments on benchmark datasets are conducted, verifying the effectiveness of our method and its resistance to potential removal methods. The codes of our method are available at \href{https://github.com/weic0810/PointNCBW}{GitHub}.
\end{abstract}

\begin{IEEEkeywords}
Dataset Ownership Verification, Backdoor Watermark, Dataset Copyright Protection, 3D Point Clouds.
\end{IEEEkeywords}

\section{Introduction}
\label{sec:intro}
Point clouds have been widely and successfully adopted in many vital applications ($e.g.$, autonomous driving \cite{chen20203d} and augmented reality \cite{lim2022point}) since they can provide rich geometric, shape, and scale information \cite{fan2023mba}. In particular, collecting point clouds is even more time-consuming and expensive compared to classical data types ($e.g.$, image or video). It necessitates using costly 3D sensors and intricate data processing procedures ($i.e.$, registration \cite{lv2023kss} and filtering \cite{zhang2020pointfilter}). As such, point cloud datasets are valuable intellectual property.


Due to the widespread applications, point cloud datasets are likely to be publicly released as open-sourced or commercial datasets. However, to the best of our knowledge, almost all existing methods cannot be directly exploited to protect their copyright when they are publicly released. In such scenarios, the adversaries may train their commercial models on open-sourced datasets that are restricted to academic or research purposes or even on commercial ones that have been illegally redistributed. Arguably, the protection difficulty stems mainly from the publicity of these datasets and the black-box nature of the suspicious models, since existing traditional data protection schemes either hinder the dataset accessibility ($e.g.$, data encryption \cite{acar2018survey}), requir manipulation of model training ($e.g.$, differential privacy \cite{abadi2016deep}), or demand accessing training samples during the verification process ($e.g.$, digital watermark \cite{ferreira2020robust}).

\begin{figure}[!t]
\vspace{-1em}
	\centering  
  \captionsetup[subfloat]{font=small}
	\subfloat[{not stealthy}]{
		\includegraphics[scale=0.45]{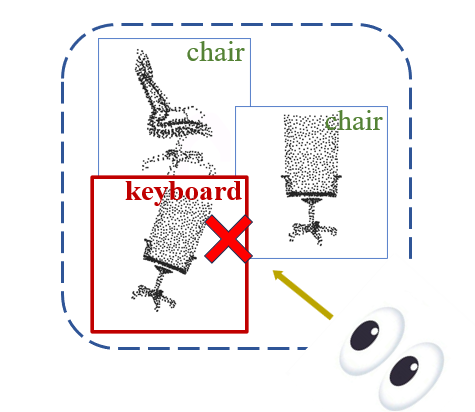}}
     \hfill
	\subfloat[non-scalable]{
		\includegraphics[scale=0.22]{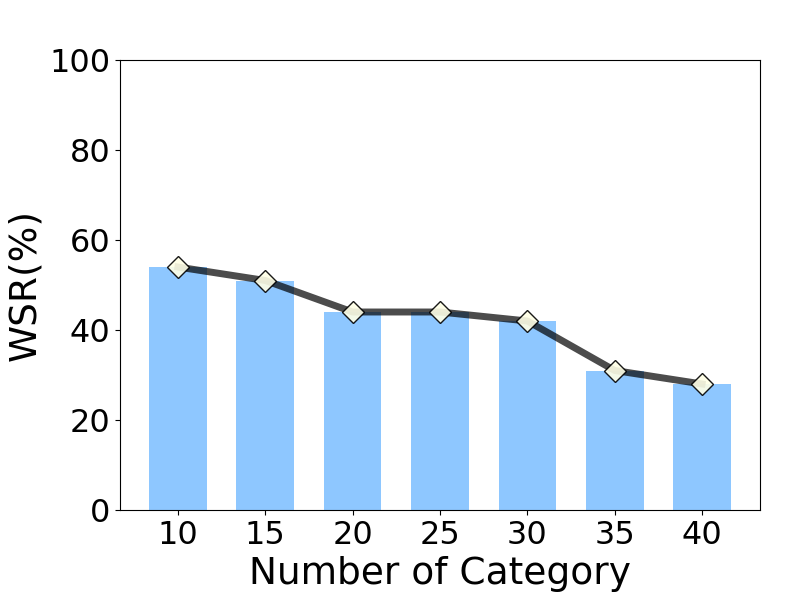}}
	\caption{The limitations of existing backdoor attacks that could be used as watermarks to protect point cloud datasets. \textbf{(a)} Existing poison-label backdoor watermarks ($i.e.$, PCBA \cite{xiang2021backdoor}, PointPBA \cite{li2021pointba}, IRBA \cite{gao2023imperceptible}) are not stealthy under human inspection due to sample-label mismatch. \textbf{(b)} The only existing clean-label backdoor watermark ($i.e.$, PointCBA \cite{li2021pointba}) has limited effect (measured by watermark success rate (WSR)) when the protected dataset contains many categories.}
 \label{fig:limitations}
 \vspace{-1.3em}
\end{figure}

Currently, to the best of our knowledge, dataset ownership verification (DOV) \cite{li2023black,li2022untargeted,tang2023did,guo2023domain} is the only promising approach that can be generalized to protect the copyright of public point cloud datasets. This method was initially and primarily used to safeguard the copyright of image datasets. Specifically, in existing DOV methods, dataset owners adopted and designed backdoor attacks \cite{li2022backdoor} to watermark the original dataset before releasing it. Given a suspicious third-party black-box model that can only be accessed via API, dataset owners can verify whether it is trained on the protected dataset by inspecting whether it has owner-specified backdoor behaviors. The backdoor-based DOV successfully safeguards publicly accessible datasets, particularly in the presence of black-box suspicious models. As such, the key to protecting the copyright of point cloud datasets via DOV lies in designing a suitable backdoor watermark for them.


However, directly applying existing backdoor attacks against 3D point clouds as watermarks faces several challenges, as shown in Figure \ref{fig:limitations}. Firstly, except for PointCBA \cite{li2021pointba}, all existing point cloud backdoor methods ($i.e.$, PCBA \cite{xiang2021backdoor}, PointPBA \cite{li2021pointba}, IRBA \cite{gao2023imperceptible}) used poisoned labels, in which the assigned labels of watermarked samples are different from their ground-truth ones. Accordingly, these watermarks lack \emph{stealthiness} as they can be easily detected and removed by malicious dataset users who scrutinize the correlation between point clouds and their corresponding labels. Secondly, the performance of the only existing clean-label backdoor watermark, \ie, PointCBA, is not \emph{scalable}. In other words, its watermark performance will significantly decrease when datasets contain many categories, preventing its application as a watermark for large-scale point cloud datasets.

We find that the non-scalability of existing clean-label backdoor watermarks for both images and point clouds \cite{turner2019label,li2021pointba,gao2023not} comes from their common poisoning paradigm, where defenders can only watermark samples from a specific category ($i.e.$, target class). As such, the more categories of samples in the dataset, the smaller the maximum watermarking rate ($i.e.$, maximal proportion of samples for watermark), resulting in a reduction in overall watermark effects. We argue that this defect is primarily caused by the \emph{positive} trigger effects of existing clean-label watermarks. Specifically, these methods have to add triggers to samples from the target class, aiming to build a positive connection between the trigger pattern and the target label ($i.e.$, adding triggers to any benign sample increases the probability of being predicted as the target label). 

In this paper, motivated by the aforementioned understandings, we propose to design a scalable clean-label backdoor watermark by introducing the \emph{negative} trigger effects, where we intend to decrease the prediction confidence of watermarked models to samples from the target class when owner-specified trigger patterns arise. Specifically, before implanting trigger patterns, we first perturb selected point clouds from non-target categories so that their features lie close to those from the target class. After that, we implant the trigger patterns into these samples to generate the watermarked data. The labels of these watermarked samples are different from the target label, but they lie close to those of the target class in feature space. Accordingly, the trained DNNs will learn to treat the inserted trigger pattern as a signal to deny predicting the target label. This method is called \textbf{n}egative \textbf{c}lean-label \textbf{b}ackdoor \textbf{w}atermark for \textbf{point} clouds (\textbf{PointNCBW}). Our proposed PointNCBW surpasses existing backdoor-based DOV methods in two aspects. Firstly, the watermarks generated by PointNCBW are inherently more resistant to scrutiny by malicious dataset users due to the consistency between point clouds and their corresponding labels. This alignment significantly enhances the stealthiness of the watermarks. Secondly, negative triggers facilitate ownership verification based on PointNCBW to work effectively in large-scale datasets with numerous categories, leading to superior scalability. We further substantiate this scalability claim through comparative experimentation, detailed in Section \ref{sec Scalability}. In addition, we also design a hypothesis-test-guided dataset ownership verification based on our PointNCBW by examining whether the suspicious model shows less confidence in point clouds containing owner-specified triggers from the target class. It alleviates the adverse effects of randomness introduced by sample selection.

The main contributions of this paper are four-fold: 
\begin{itemize}
\item{We explore how to protect the copyright of point cloud datasets via dataset ownership verification (DOV).}
\item{We reveal the limitations of using existing backdoor attacks against point clouds for DOV and their reasons.}
\item{We propose the first scalable clean-label backdoor watermark for point cloud datasets ($i.e.$, PointNCBW).}
\item{We conduct extensive experiments on benchmark datasets, which verify the effectiveness of our PointNCBW and its resistance to potential adaptive attacks.}
\end{itemize}

\section{Related Work}

\subsection{Deep Learning on 3D Point Clouds}
With the advent of deep learning, point-based models have gained significant popularity owing to their exceptional performance in various 3D computer vision tasks. Qi $et\ al.$ \cite{pointnet} first proposed PointNet, which directly processes raw point cloud data without needing data transformation or voxelization. It adopts the symmetric function, max-pooling, to preserve the order-invariant property of point clouds. To learn the local features of point clouds, they further proposed a hierarchical network PointNet++ \cite{pointnet++}, which can capture geometric structures from the neighborhood of each point better. Inspired by them, subsequent research \cite{dgcnn,pointcnn,yan2020pointasnl,zhao2021point} in point cloud-based deep learning has emerged following similar principles. 

\subsection{Backdoor Attack}

Backdoor attack is an emerging yet critical training-phase threat to deep neural network (DNNs) \cite{li2022backdoor}. In general, the adversaries intend to implant a latent connection between the adversary-specified trigger patterns and the malicious prediction behaviors (\ie, backdoor) during the training process. The attacked models behave normally on benign samples, whereas their prediction will be maliciously changed whenever the trigger pattern arises. Currently, most of the existing backdoor attacks were designed for image classification tasks \cite{gu2019badnets,li2021invisible,qi2023revisiting,gao2023sbackdoor,bai2024badclip,yang2024not}, although there were also a few for others \cite{zhai2021backdoor,li2022few,gao2023backdoor,xiang2024badchain,fan2024stealthy,cai2024towards}. These methods could be divided into two main categories: \emph{poison-label} and \emph{clean-label} attacks, depending on whether the labels of modified poisoned samples are consistent with their ground-truth ones \cite{turner2018clean}. In particular, clean-label attacks are significantly more stealthy than poison-label ones since dataset users cannot identify them based on the image-label relationship even when they can catch some poisoned samples. However, as demonstrated in \cite{gao2023not}, clean-label attacks are significantly more challenging to succeed due to the antagonistic effects of `robust features' related to the target class contained in poisoned samples.


Currently, a few studies have focused on 3D point clouds \cite{liu2019extending,he2023generating}, particularly in the context of backdoor attacks \cite{xiang2021backdoor,li2021pointba,gao2023imperceptible}. Although these attacks also
targeted the classification tasks, developing effective backdoor attacks for 3D point clouds is challenging due to inherent differences in data structure and deep learning model architectures~\cite{li2021pointba}.
As we will show in our experiments, the only existing clean-label attack (\ie, PointCBA \cite{li2021pointba}) exhibits significantly reduced efficacy compared to its poison-label counterparts (\ie, PCBA \cite{xiang2021backdoor}, PointPBA-Ball, PointPBA-Rotation \cite{li2021pointba}, and IRBA \cite{gao2023imperceptible}). This difference is especially noticeable when the dataset has many categories, revealing a lack of scalability for large datasets. Therefore, designing effective and scalable clean-label backdoor attacks against 3D point clouds to address this issue is still an ongoing challenge that needs more research.

\subsection{Dataset Protection}
Dataset protection is a longstanding research problem, aiming to prevent the unauthorized use of datasets. Existing classical dataset protection methods involve three main categories: data encryption, differential privacy, and digital watermarking. Especially, data encryption \cite{acar2018survey} encrypted the protected datasets so that only authorized users who hold a secret key for decryption can use it; differential privacy \cite{abadi2016deep} prevented the leakage of sensitive personal information during model training; digital watermark \cite{ferreira2020robust} embeded an owner-specified pattern to the protected data for post-hoc examination. However, these methods are unable to protect publicly released datasets ($e.g.$, ImageNet) from being used to train third-party commercial models without authorization due to the public availability of datasets and the black-box accessing nature of commercial models \cite{li2023black}. Recently, unlearnable examples \cite{huang2021unlearnable} were also proposed to protect the datasets by directly preventing them from being learned by DNNs. However, this method cannot be used to protect public datasets since it usually requires the modification of all samples and compromises dataset utilities.

Dataset ownership verification (DOV) \cite{li2023black,li2022untargeted,tang2023did,guo2023domain,zhao2024zero} intends to verify whether a given suspicious model is trained on the protected dataset under the black-box setting, where defenders can only query the suspicious model. To the best of our knowledge, this is currently the only feasible method to protect the copyright of public datasets. Specifically, existing DOV methods intend to implant specific (backdoor) behaviors in models trained on the protected dataset while not reducing their performance on benign samples. Dataset owners can verify ownership by examining whether the suspicious model has specific backdoor behaviors. However, existing DOV methods are mostly designed for image classification datasets. How to protect other types of datasets is left far behind.

\begin{figure*}[!t]
    \centering 
    \includegraphics[width=0.95\linewidth]{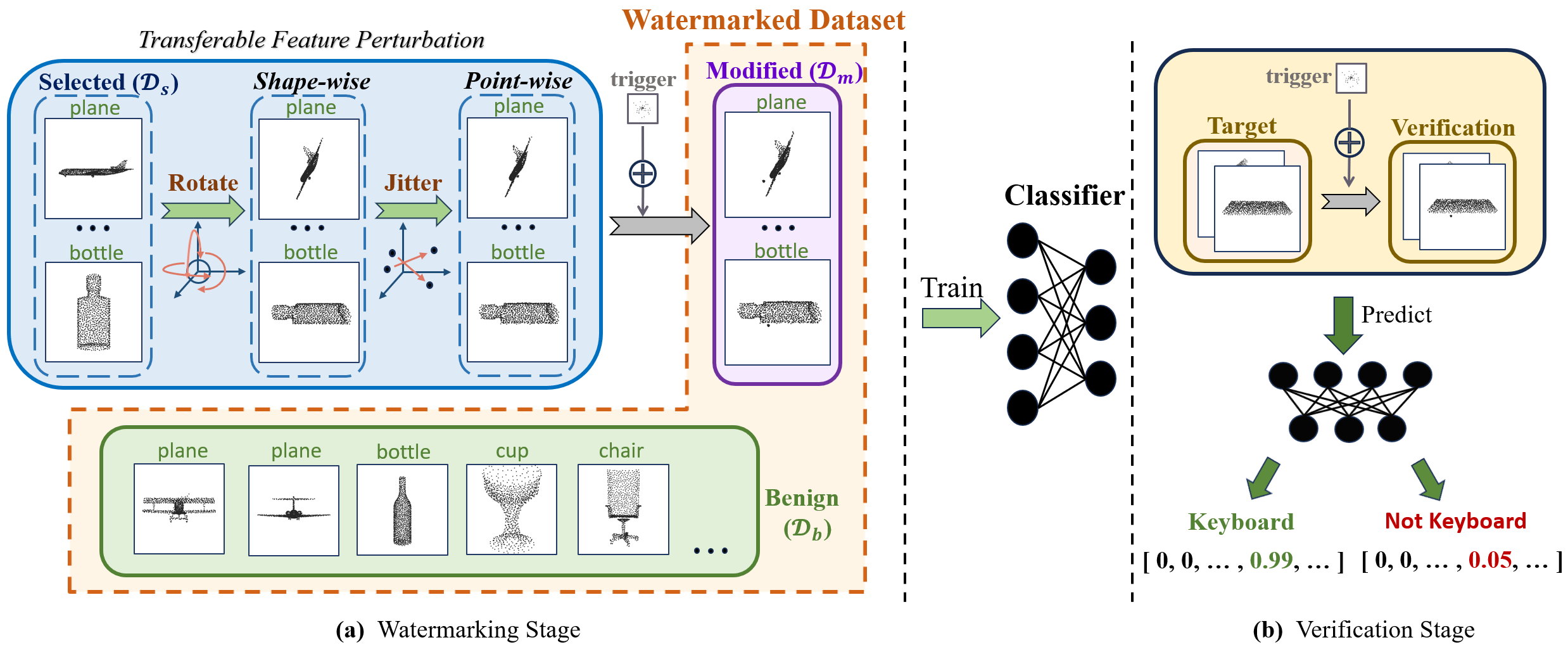}
    \caption{The main pipeline of dataset ownership verification based on our negative clean-label backdoor watermark for point clouds (PointNCBW). In the \textbf{watermarking stage}, we generate the watermarked version of the original dataset. Specifically, we design and exploit transferable feature perturbation (TFP) to perturb a few selected point clouds from the original dataset with non-target categories so that they lie close to those from the target class in the feature space. Our TFP has two steps, including shape-wise and point-wise perturbations, to ensure transferability across model structures. In the \textbf{verification stage}, we verify whether a suspicious third-party model is trained on our protected dataset by examining whether it misclassifies samples from the target class containing owner-specified trigger patterns via the hypothesis test.}
    \label{fig:main}
    \vspace{-0.8em}
\end{figure*}

\section{Negative Clean-label Backdoor Watermark for Point Clouds (PointNCBW)}
\label{A Novel Point Cloud Backdoor Scheme}

In this section, we present a comprehensive description of our proposed method (\ie, PointNCBW). Before we illustrate its technical details, we first describe our threat model and briefly review the main pipeline of backdoor watermarks.

\subsection{Preliminaries}

\noindent \textbf{Threat Model.} 
This paper focuses on backdoor watermarks for point cloud datasets in classification tasks. Specifically, the dataset owner can watermark some benign point clouds before releasing the victim dataset. Dataset users will exploit the released (watermarked) datasets to train and deploy their DNNs but will keep their training details private. Accordingly, dataset owners can only examine whether a suspicious model is trained on their watermarked point cloud dataset by its prediction behaviors under the \emph{black-box} access setting.

\vspace{0.3em}

\noindent \textbf{The Main Pipeline of Existing Backdoor Watermarks for Point Clouds.} 
Let $\mathcal{D}=\{(\bm{x}_i,y_i)\}_{i=1}^N$ denotes the benign dataset containing $N$ point clouds. Each point cloud $\bm{x}_i$ contains $M_i$ points ($i.e.$, $\bm{x}_i \in \mathbb{R}^{3 \times M_i}$) whose label $y_i \in \{1, 2, \cdots, K\}$. How to generate the watermarked dataset $\mathcal{D}_w$ is the cornerstone of all backdoor watermarks. Currently, all backdoor watermarks for point clouds are targeted and with \emph{positive} trigger effects. In other words, adding triggers increases the probability that watermarked DNNs predict samples to the target class $y^{(t)}$. Specifically, $\mathcal{D}_w$ has two disjoint parts, including the modified version of a selected subset ($i.e.$, $\mathcal{D}_s$) of $\mathcal{D}$ and remaining benign point clouds, $i.e.$, $\mathcal{D}_w=\mathcal{D}_m \cup \mathcal{D}_b$, where $\mathcal{D}_b=\mathcal{D} \backslash \mathcal{D}_s$, $\mathcal{D}_m=\{(\bm{x}', y^{(t)})|\bm{x}'=G(\bm{x}),(\bm{x}, y)\in \mathcal{D}_s\}$, $G: \mathbb{R}^{3 \times M} \rightarrow \mathbb{R}^{3 \times M}$ is the owner-specified generator of watermarked samples. $\lambda \triangleq \frac{|\mathcal{D}_m|}{|\mathcal{D}|}$ is the \emph{watermarking rate}. In general, backdoor watermarks are mainly characterized by their watermark generator $G$. For example, $G(\bm{x}) = (\bm{I}-\diag{\bm{\delta}}) \cdot \bm{x} + \diag{\bm{\delta}} \cdot \bm{\Gamma}$, where $\bm{\delta}$ is a 0-1 vector, $\bm{I}$ is the identity matrix, and $\bm{\Gamma}$ is pre-defined trigger pattern in PointPBA-Ball \cite{li2021pointba}. In particular, in existing clean-label backdoor watermarks ($e.g.$, PointCBA \cite{li2021pointba}), dataset owners can only watermark samples from the target class, $i.e.$, $\mathcal{D}_s \subset \mathcal{D}^{(t)} \triangleq \{(\bm{x}, y)| (\bm{x}, y) \in \mathcal{D}, y = y^{(t)} \}$. As such, their watermarking rate is $\frac{1}{K}$ at most for class-balanced datasets. This limits their performance when the number of categories in the victim dataset is relatively large.

\subsection{The Overview of PointNCBW}
\label{Watermark Method for Point Cloud Dataset}


In this paper, we design a clean-label backdoor watermark for point clouds with \emph{negative} trigger effects to overcome the limitations of existing backdoor watermarks (as demonstrated in our introduction). We denote our watermarking method as \textbf{n}egative \textbf{c}lean-label \textbf{b}ackdoor \textbf{w}atermark for \textbf{point} clouds (\textbf{PointNCBW}). 
In general, our PointNCBW consists of two main stages: transferable feature perturbation (TFP) and trigger implanting. Specifically, TFP perturbs selected point clouds with non-target categories so that they lie close to those from the target class in the hidden space defined by a pre-trained model. After that, we insert trigger pattern $\bm{\Gamma}$ to obtain modified point clouds $\mathcal{D}_m$ via
\begin{equation}
\label{our Dm}
    \mathcal{D}_m=\{(\bm{x}',y)|\bm{x}'=U(p(\bm{x}),\bm{\Gamma}),(\bm{x},y)\in \mathcal{D}_s\},
\end{equation}
where $U(p(\bm{x}),\bm{\Gamma})$ is our watermark generator, $p$ represents our TFP, and $U$ is our trigger implanting function implemented with random replacing function.

Since we don't change the label of these watermark samples, the watermarked DNNs will interpret inserted triggers as signals to deny predicting the target label. The main pipeline of our method is shown in Figure \ref{fig:main}.
We also provide a detailed explanation of the underlying reasons behind the effectiveness of PointNCBW through experimental analysis in Section \ref{sec: reason_explain}.

\vspace{-0.6em}
\subsection{Transferable Feature Perturbation (TFP)}
\vspace{-0.5em}
\label{Feature Perturbation}


\vspace{0.3em}
\noindent
\textbf{General Perturbation Objective.}
After selecting the target category $y^{(t)}$ and the source sample group $\mathcal{D}_s$, our objective is to perturb each sample in $\mathcal{D}_s$ to bring them closer to category $y^{(t)}$ in feature space. Specifically, we randomly select some samples from category $y^{(t)}$ denoted as $\mathcal{D}_t$, and utilize the features of $\mathcal{D}_t$ as an alternative to the features of category $y^{(t)}$.
Let $\bm{x}_s$ represent one sample in $\mathcal{D}_s$, our general objective of perturbation is formulated by

\begin{equation}
\label{equ:perturbation general obj}
    \begin{aligned}
     &\min_{p}{\frac{1}{|\mathcal{D}_t|}\sum_{\bm{x}_t\in \mathcal{D}_t}\mathcal{E}(g_f(p(\bm{x}_s)),g_f(\bm{x}_t))},
    \end{aligned}
\end{equation}
where $\mathcal{E}$ is a Eluer distance in feature space $\mathbb{R}^d$ and $g_f$ is the feature extracting function of point cloud. In practice, we implement the $g_f$ with the second-to-last layer output of our surrogate model $g$ for approximation.

However, since we employ a surrogate model for feature extraction, it is crucial to ensure that our feature perturbation remains effective under different networks beyond the surrogate one. This raises the concern of transferability, which refers to the ability of our watermark to work effectively across different model structures.
To enhance the transferability, we optimize general objective function in Eq.~(\ref{equ:perturbation general obj}) through transferable feature perturbation (TFP). The transferability is also empirically verified in Section \ref{Transferability}. Specifically, our TFP consists of two sequential steps, including \emph{shape-wise} and \emph{point-wise} perturbations, as outlined below. 


\vspace{0.3em}
\noindent
\textbf{Shape-wise Perturbation.}
\label{Shape-wise Perturbation}
Rotation is a common transformation of 3D objects. It has been empirically proven to be an effective and inconspicuous method to conduct point cloud perturbation with transferability in previous works since DNNs for point cloud are sensitive to geometric transformations \cite{zhao2020isometry, fan2022careful}. Arguably, the observed transferability through rotation may primarily result from the commonalities in point cloud feature processing employed by current DNNs designed for point clouds. Specifically, the widespread application of max-pooling operations \cite{pointnet,pointnet++,dgcnn,yan2020pointasnl} across these architectures potentially contributes to this phenomenon. While max-pooling is effective at capturing overall features in unordered point clouds, it has difficulty maintaining rotational invariance. The wide application of max-pooling in current DNNs used for point cloud not only makes the feature spaces of these models similar, but also helps our perturbations transfer better. The changes brought about by rotations are often misinterpreted by different models, which supports the effectiveness of our method. As such, we exploit it to design our shape-wise perturbation with transformation matrix $S$ defined as follows:
\begin{equation}
\label{matrix M}
    S(\theta) = R(\psi)\cdot R(\phi) \cdot R(\gamma),
\end{equation}
where $R(\psi),R(\phi)$, and $R(\gamma)$ are rotation matrix with Eluer angles $\psi,\phi,\gamma$. Finally, we have the objective function of shape-wise perturbation, as follows:
\begin{equation}
\label{rot loss}
     \mathcal{L}_s(\theta)=\frac{1}{|\mathcal{D}_t|}\sum_{\bm{x}_t\in \mathcal{D}_t}\mathcal{E}(g_f(\bm{x}_s\cdot S(\theta)),g_f(\bm{x}_t)).\\
\end{equation}

Specifically, we employ the gradient descent to minimize the loss function defined in Eq. (\ref{rot loss}). Besides, to alleviate the impact of local minima, we employ a strategy of random point sampling and choose the optimal starting point for the optimization process. We summarize the shape-wise feature perturbation in Algorithm \ref{alg:sfe}.

\begin{figure}[!t]
\vspace{-1em}
    \begin{algorithm}[H]
    \caption{Shape-wise feature perturbation.}\label{alg:sfe}
    \begin{algorithmic}[1]
    \Require{point cloud $\bm{x}$, rotation matrix $S(\theta)$, objective function $\mathcal{L}_s$, number of iterations $T$, step size $\beta$, number of starting points sampling $n$}
    \Ensure{new point cloud $\bm{x}'$}
    \State randomly sample $\{\theta_i\}_{i=1}^{n}$ from range $(0,2\pi)^3$
    \State $\theta_0 \gets \mathop{\arg\min \mathcal{L}_s(\theta_i)}$
    \For{$t=1$ to $T$}
    \State obtain gradient $g_t=\nabla_{\theta}{\mathcal{L}_s(\theta_{t-1})}$ through chain rule
    \State $\theta_t \gets \theta_{t-1} - \beta \cdot g_t$
    \EndFor
    \State $\bm{x}' \gets \bm{x} \cdot S(\theta_T)$
    \end{algorithmic}
    \end{algorithm}
    \vspace{-1.8em}
\end{figure}

\vspace{0.3em}
\noindent
\textbf{Point-wise Perturbation.}
\label{Point-wise Perturbation}
After the shape-wise perturbation where we can obtain the optimal solution $\hat{\theta}$ for Eq. (\ref{rot loss}), we jitter the point cloud sample on the point-wise level. We denote the offset of coordinates as $\Delta \bm{x}$, then the objective function of optimization of point-wise perturbation is 

\begin{equation}
\begin{aligned}
    \label{equ: mig loss}
     \mathcal{L}_p(\Delta \bm{x})=
     \frac{1}{|\mathcal{D}_t|}\cdot
     \sum_{\bm{x}_t\in \mathcal{D}_t} & {\mathcal{E}(g_f(\bm{x}_s\cdot S(\hat{\theta})+\Delta \bm{x}),g_f(\bm{x}_t))} \\ &+\eta \cdot \Vert\Delta \bm{x}\Vert_2, 
\end{aligned}
\end{equation}
where we incorporate a regularization term $\Vert\Delta \bm{x}\Vert_2$ into the loss function to limit the magnitude of perturbation in $\ell_2$ norm. To minimize the loss $\mathcal{L}_p$ and mitigate the problem of perturbed samples `overfitting' to the surrogate model, we incorporate momentum \cite{dong2018boosting} into the iterative optimization process to stabilize updating directions and escape from `overfitting' local minima. Specifically, we compute the momentum of gradients and perturb the coordinates of points in the opposite direction of the momentum. The optimization process of point-wise feature perturbation is summarized in Algorithm \ref{alg:pfe}.

\begin{figure*}[!t]
    \centering
    \vspace{-0.5em}
    \includegraphics[width=0.92\linewidth]{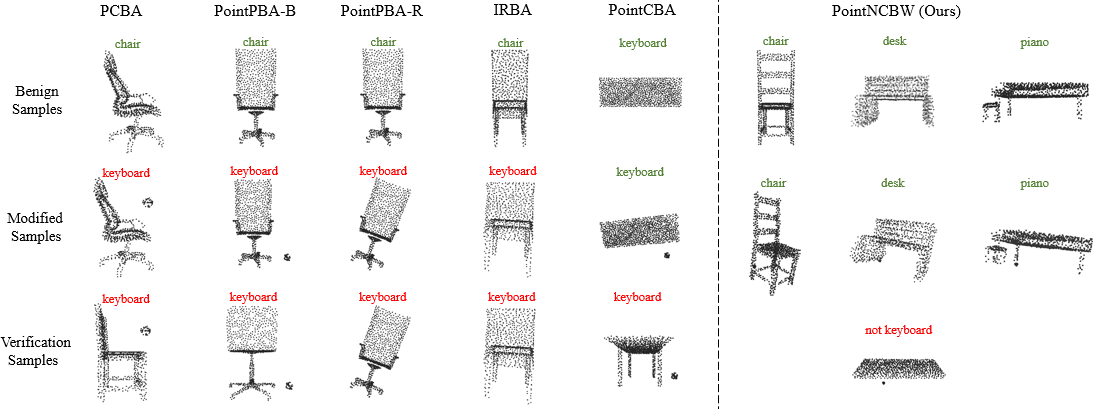}
    \caption{The example of point clouds involved in different backdoor watermarks.}
    \label{fig:samples}
\end{figure*}


\section{Ownership Verification via PointNCBW}
\label{Copyright Verification}
Given a suspicious black-box model $f(\cdot)$, we can verify whether it was trained on our protected point cloud dataset by analyzing its predictions on samples from the target class and their watermarked versions. To alleviate the side effects of randomness, we design a hypothesis-test-guided method, inspired by existing backdoor-based dataset ownership verification for images \cite{shao2024explanation,li2022untargeted,guo2023domain}, as follows. 

\begin{algorithm}[!t]
\caption{Point-wise feature perturbation.}\label{alg:pfe}
\begin{algorithmic}[1]
\Require{point cloud $\bm{x}$, objective function $\mathcal{L}_p$, number of iterations $T$, step size $\beta$, decay factor $\mu$}

\Ensure{new point cloud $\bm{x}'$}

\State $\Delta \bm{x}^{0}=0$, $g_0 = 0$
\For{$t=1$ to $T$}
\State obtain the gradient $g'=\nabla_{\Delta \bm{x}^{t-1}}{\mathcal{L}_p(\Delta \bm{x}^{t-1})}$
\State $g_t \gets g_{t-1} + \mu \cdot \frac{g'}{\Vert g' \Vert_2}$
\State $\Delta \bm{x}^{t} \gets \Delta \bm{x}^{t-1} - \beta \cdot g_t$
\EndFor
\State $\bm{x}'=\bm{x}+\Delta \bm{x}^{T}$
\end{algorithmic}
\end{algorithm}

\begin{proposition}
Suppose $f(\bm{x})$ is the posterior probability of $\bm{x}$ predicted by the suspicious model. Let variable $X$ denotes the benign sample from the target class $y^{(t)}$ and variable $X'$ is its verified version ($i.e.$ $X'=U(X,\Gamma)$). Let variable $P_b = f(X)_{y^{(t)}}$ and $P_v = f(X')_{y^{(t)}}$ denote the predicted probability of $X$ and $X'$ on $y^{(t)}$. Given null hypothesis $H_0: P_b=P_v+\tau$ ($H_1:P_b>P_v+\tau$), where hyper-parameter $\tau\in(0,1)$, we claim that the suspicious model is trained on the watermarked dataset (with $\tau$-certainty) if and only if $H_0$ is rejected.
\end{proposition}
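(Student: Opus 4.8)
The plan is to recast the verification as a one-sided paired hypothesis test and to tie its rejection region to the design property of PointNCBW. The conceptual engine is the \emph{negative} trigger effect established earlier: because a watermarked DNN learns to read $\Gamma$ as a signal to deny the target label, implanting $\Gamma$ into a genuine target-class point cloud $X$ should systematically pull posterior mass off $y^{(t)}$, so that $P_b=f(X)_{y^{(t)}}$ exceeds $P_v=f(U(X,\Gamma))_{y^{(t)}}$ by a non-trivial margin. The threshold $\tau$ is precisely the margin we insist upon: $H_0:P_b=P_v+\tau$ encodes the ``no sufficiently strong negative effect'' regime, while $H_1:P_b>P_v+\tau$ encodes ``the confidence drop beats $\tau$.''

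First I would draw $m$ i.i.d.\ benign samples $X_1,\dots,X_m$ from the target class, form their verified versions $X_i'=U(X_i,\Gamma)$, and record the paired posteriors $P_{b,i},P_{v,i}$. Setting $\Delta_i \triangleq P_{b,i}-P_{v,i}-\tau$, the null becomes $\mathbb{E}[\Delta_i]=0$ and the alternative $\mathbb{E}[\Delta_i]>0$. I would then build the one-sided statistic
\begin{equation}
    T=\frac{\sqrt{m}\,\overline{\Delta}}{s_\Delta},
\end{equation}
with $\overline{\Delta}$ and $s_\Delta$ the sample mean and standard deviation of the $\Delta_i$. Under $H_0$, treating the bounded posteriors as approximately Gaussian (or invoking the central limit theorem for moderate $m$), $T$ follows Student's $t$ with $m-1$ degrees of freedom, so I would reject $H_0$ at significance level $\alpha$ exactly when $T>t_{1-\alpha,\,m-1}$. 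Rejection then certifies, with confidence $1-\alpha$, that the trigger suppresses the target-class confidence by more than $\tau$; reading this suppression-beyond-$\tau$ as the operational meaning of ``$\tau$-certainty'' delivers one direction of the claim.

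The forward implication (trained on the watermark $\Rightarrow$ $H_0$ rejected) would lean on the effectiveness of PointNCBW secured by the TFP construction and the watermarking-rate analysis: a model that has absorbed the watermark exhibits the negative effect with gap comfortably above $\tau$, so $\overline{\Delta}>0$ with small variance drives $T$ past the critical value. The reverse implication ($H_0$ rejected $\Rightarrow$ trained on the watermark) is the delicate part and I expect it to be the main obstacle, since I must argue that a model \emph{not} trained on the protected dataset has no systematic reason to lose more than $\tau$ confidence on the target class when the owner's pattern $\Gamma$ is inserted, whence $\mathbb{E}[\Delta_i]\le 0$ and $H_0$ is retained. This rests on $\Gamma$ being owner-specified and independent of any clean model's decision geometry, rather than on a closed-form bound; I would therefore state the ``if and only if'' under the standing assumption that $\Gamma$ induces the negative effect only on watermark-trained models, defer its quantitative support to the experimental section, and keep $\tau$ as the tunable knob balancing the two error directions.
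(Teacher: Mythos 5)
Your reconstruction of the test itself coincides with what the paper actually does: the paper also forms paired posteriors on $m$ benign target-class samples and their triggered versions and runs a one-sided pairwise $t$-test on $P_b - P_v - \tau$, so your statistic $T=\sqrt{m}\,\overline{\Delta}/s_\Delta$ with the $t_{m-1}$ reference distribution is exactly the intended procedure. The main point of comparison is that the paper supplies \emph{no proof} of this Proposition at all --- it is presented as an operational decision rule (``we claim \dots if and only if $H_0$ is rejected''), with the forward direction supported quantitatively by a separate result (Theorem 1) and the reverse direction supported only empirically (the Independent-Trigger and Independent-Model scenarios in the experiments, where $\Delta P\approx 0$ and the p-value is $1$). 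You correctly diagnose that the ``only if'' direction cannot be derived and must rest on the standing assumption that an owner-specified $\Gamma$ has no systematic negative effect on a model never exposed to the watermark; the paper makes the same move implicitly and defers it to experiments, so your honesty here is a feature, not a gap. Where your route genuinely differs from the paper's formal content is the forward direction: you argue informally from the effectiveness of TFP that $\overline{\Delta}>0$ with small variance, whereas the paper's Theorem 1 makes this precise by a different decomposition --- it lower-bounds $P_b - P_v - \tau$ by $(1-P_v)+(\zeta-\tau-1)$ using the assumption $P_b>\zeta$, models the misclassification events as Bernoulli so that the watermark success rate $W$ is (scaled) binomial, and derives the explicit sufficient condition $\sqrt{m-1}\,(W+\zeta-\tau-1)-t_\alpha\sqrt{W-W^2}>0$ for rejection. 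That version buys a checkable threshold on $W$ and $m$ (and explains why verification succeeds even when the WSR is well below $100\%$), at the cost of the extra assumption $P_b>\zeta$; your direct paired-difference formulation is closer to the test actually run but leaves the forward implication qualitative. If you want your writeup to match the paper's level of rigor, graft the $W$-based bound onto your $\Delta_i$ framework rather than appealing to ``small variance.''
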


In practice, we randomly sample $m$ different benign samples to conduct the pairwise $t$-test \cite{larsen2005introduction} and calculate its p-value. Additionally, we also calculate the confidence score $\Delta P=P_b-P_v$ to denote the verification confidence. The larger the $\Delta P$, the more confident the verification.

In particular, we can also prove that our dataset ownership verification can succeed if its watermark success rate is sufficiently large (which could be significantly lower than 100\%) when sufficient verification samples exist, as shown in Theorem \ref{them1}. Its proof is in our appendix.

\begin{theorem}\label{them1}
Suppose $f$ generates the posterior probability by a suspicious model. Let variable $X$ denote the benign sample from the target class $y^{(t)}$ and variable $X'$ is its verified version. Let $P_b = f(X)_{y^{(t)}}$ and $P_v = f(X')_{y^{(t)}}$ denote the predicted probability of $X$ and $X'$ on $y^{(t)}$. Assume that $P_b > \zeta$, we claim that dataset owners can reject the null hypothesis $H_0$ at the significance level $\alpha$, if the watermark success rate $W$ of $f$ (with $m$ verification samples) satisfies:

\begin{equation}
    \sqrt{m-1} \cdot (W + \zeta - \tau - 1) - t_{\alpha} \cdot \sqrt{W-W^2} > 0,
\end{equation}
where $t_{\alpha}$ is $\alpha$-quantile of t-distribution with $(m-1)$ degrees of freedom.
\end{theorem}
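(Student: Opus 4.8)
The plan is to recast the watermark success rate $W$ as the empirical mean of a Bernoulli sample and then apply a one-sided paired $t$-test to the confidence drops. First I would define, for each of the $m$ verification samples, the indicator $Z_i$ of whether the trigger successfully suppresses the target prediction, so that the sample mean $\bar{Z} = W$ and the sample variance is (up to the usual $\tfrac{m}{m-1}$ factor) $W - W^2$, since a Bernoulli$(W)$ population has variance $W(1-W)$. The key modeling move is to translate the hypothesis $H_0: P_b = P_v + \tau$ into a statement about the paired differences $D_i = P_{b,i} - P_{v,i}$: the test statistic is $t = \sqrt{m}\,(\bar{D} - \tau)/s_D$, and $H_0$ is rejected at level $\alpha$ when this exceeds the appropriate $t$-quantile. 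The assumption $P_b > \zeta$ is what lets me lower-bound $\bar{D}$: on the $W$-fraction of samples where the watermark succeeds, $P_v$ is driven to (near) zero, contributing roughly $P_b > \zeta$ to the drop, while on the failing fraction the drop is bounded below by $0$; pushing these bounds through gives $\bar{D} \geq W + \zeta - 1$ in the worst case (writing $P_b \geq \zeta$ and $P_v \leq 1$ on the failures).

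The second main step is to bound the sample standard deviation $s_D$ from above so that the test statistic can be bounded from below. Here I would argue that the variability of the paired differences is controlled by the Bernoulli variability of the success indicator, so that $s_D$ is dominated by $\sqrt{W - W^2}$ (the worst-case spread occurs when the drops are concentrated at the two extremes, mirroring the $0/1$ structure of $Z_i$). Substituting the lower bound on $\bar{D}$ and the upper bound on $s_D$ into $t = \sqrt{m}\,(\bar{D}-\tau)/s_D$, and requiring $t > t_\alpha$, yields, after clearing the denominator and folding $\sqrt{m}$ into $\sqrt{m-1}$ via the variance normalization, exactly the stated inequality
\begin{equation}
\sqrt{m-1}\cdot(W + \zeta - \tau - 1) - t_\alpha\cdot\sqrt{W - W^2} > 0.
\end{equation}
The $\sqrt{m-1}$ rather than $\sqrt{m}$ arises precisely because the unbiased sample variance carries the $(m-1)$ divisor, and the $\sqrt{W-W^2}$ term is the sample standard deviation of the Bernoulli indicators.

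I would then read off the conclusion: whenever this inequality holds, the observed $t$-statistic exceeds $t_\alpha$ under $H_0$, so the owner rejects $H_0$ at significance level $\alpha$ and declares ownership. I expect the main obstacle to be making the two bounding steps rigorous and mutually consistent, namely justifying that the worst-case configuration of the paired differences $\{D_i\}$ simultaneously minimizes $\bar{D}$ and maximizes $s_D$, and that both extremes are realized by the same $0/1$-type assignment so that the Bernoulli variance $W-W^2$ legitimately governs the spread. A secondary subtlety is the interplay between the assumption $P_b > \zeta$ and the definition of $W$ as a \emph{success rate} rather than a drop: I would need to fix precisely what counts as a successful trigger (e.g., $P_v$ falling below the decision threshold, hence the drop being at least $\zeta$) so that the arithmetic $\bar{D} \geq W + \zeta - 1$ is tight. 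Once the success event is pinned to a guaranteed minimum drop, the remaining algebra is routine.
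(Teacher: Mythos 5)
Your proposal follows essentially the same route as the paper's proof: use $P_b>\zeta$ to convert $H_1$ into a statement about $(1-P_v)+(\zeta-\tau-1)$, replace the per-sample confidence drops by the $0/1$ watermark-success indicators (sample mean $W$, unbiased sample variance $\tfrac{m}{m-1}(W-W^2)$), and apply a one-sided $t$-test, which after clearing the denominator gives exactly $\sqrt{m-1}\,(W+\zeta-\tau-1)-t_\alpha\sqrt{W-W^2}>0$. The two bounding steps you flag as delicate are not separately justified in the paper either: it simply builds the $t$-statistic directly on the indicators $E_i$ (so $\sqrt{W-W^2}$ is their sample standard deviation by construction), which amounts to the same ``$P_v\approx 0$ on successes'' substitution you describe.
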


\begin{table*}[!t]
  \centering
  \caption{The watermark performance of different backdoor watermarks for point clouds on the ModelNet40 dataset (having 40 categories), the ShapeNetPart dataset (having 16 categories), and the PartNet dataset (having 24 categories), respectively. In particular, we mark the worst cases ($i.e.$, WSR $< 80\%$ or IoM $> 80\%$) in \red{red}.}
  \vspace{-0.3em}
  \scalebox{1.1}{
  \begin{tabular}{c|c|c|ccc|ccc}
    \toprule
     \multirow{2}{*}{Dataset$\downarrow$}&\multirow{2}{*}{Type$\downarrow$} & Model$\xrightarrow{}$& \multicolumn{3}{c|}{PointNet} & \multicolumn{3}{c}{PointNet++}\\
     \cline{3-9}
     &&Method$\downarrow$, Metric$\xrightarrow{}$ & \textbf{ACC} (\%) & \textbf{WSR} (\%) & IoM (\%) & \textbf{ACC} (\%) & \textbf{WSR} (\%) & IoM (\%) \\
    \hline
    \multirow{7}{*}{ModelNet40} &
    Benign & No Watermark & 89.2 & N/A & N/A & 91.9 & N/A & N/A \\
    \cline{2-9}
    &
   \multirow{4}{*}{Poison-label}  
   & PCBA & 87.8 & 98.9  & \red{100} & 90.6 & 99.3 &  \red{100} \\
    && PointPBA-B & 87.6 & 99.8 & \red{100} & 90.5 & 99.5   & \red{100}\\
    && PointPBA-R &87.8 &  68.8 & \red{100} & 90.4 & 81.9 & \red{100}\\
    &&IRBA &87.8& 80.8 & \red{100} & 91.0& 88.7 & \red{100} \\
    \cline{2-9}
    &
    \multirow{2}{*}{Clean-label} 
    & PointCBA & 88.8 & \red{31.3}  & 0.0 & 91.0 & \red{50.9}  & 0.0 \\
    && PointNCBW-patch & 89.0 & \red{33.9} & 0.0 & 91.0 & \red{36.5} & 0.0 \\ 
     && PointNCBW (Ours) & 88.7 & 82.1 & 0.0 & 91.2 & 85.6& 0.0 \\ 

     \hline
    \multirow{7}{*}{ShapeNetPart} &
    Benign & No Watermark & 98.6 & N/A & N/A & 98.9& N/A & N/A \\
    \cline{2-9}
    &
   \multirow{4}{*}{Poison-label}  
   & PCBA &98.6 & 98.5  & \red{100} & 98.8 & 98.9 &  \red{100} \\
    && PointPBA-B &98.5  &100& \red{100} & 98.9 & 99.5& \red{100}\\
    && PointPBA-R & 98.3 & 84.0 & \red{100} & 98.9& 89.6  & \red{100}\\
    &&IRBA &98.4  & 92.1 & \red{100} & 98.6 & 96.9 & \red{100} \\
    \cline{2-9}
    &
    \multirow{2}{*}{Clean-label} 
    & PointCBA & 98.4 & \red{64.8}  & 0.0 & 98.4 & \red{68.6}  & 0.0 \\
    && PointNCBW-patch & 98.5 & \red{45.5}& 0.0 & 98.5  & \red{51.9}  & 0.0  \\ 
     && PointNCBW (Ours) & 98.4 & 93.1 & 0.0 & 98.5 & 97.6 & 0.0 \\ 

      \hline
    \multirow{7}{*}{PartNet} &
    Benign & No Watermark & 96.3 & N/A & N/A & 96.8 & N/A & N/A \\
    \cline{2-9}
    &
   \multirow{4}{*}{Poison-label}  
   & PCBA & 95.3&99.8  & \red{100} & 95.6 & 99.5 &  \red{100} \\
    && PointPBA-B & 95.3 & 99.9 & \red{100} & 95.5 & 99.6 & \red{100}\\
    && PointPBA-R & 95.2 &86.0  & \red{100} & 95.4 & 93.1  & \red{100}\\
    &&IRBA & 95.7 & 85.0 & \red{100} & 96.5& 90.4 & \red{100} \\
    \cline{2-9}
    &
    \multirow{2}{*}{Clean-label} 
    & PointCBA & 96.1 &\red{34.3} & 0.0 & 96.7 & \red{54.0}  & 0.0 \\
    && PointNCBW-patch & 96.2 & \red{32.3} & 0.0  & 96.7  & \red{38.7} &  0.0\\ 
     && PointNCBW (Ours) & 96.1 &  86.2 & 0.0 & 96.6 & 88.4 & 0.0 \\ 
    \bottomrule
  \end{tabular}
  }
  \label{watermark performance}
  \vspace{-0.5em}
\end{table*}

\section{Experiments}

\subsection{Experiment Setup}
\label{sec:exp_settings}

\noindent
\textbf{Datasets.}
We conduct experiments on two datasets, including ModelNet40 \cite{modelnet}, ShapeNetPart \cite{shapenet} and PartNet \cite{mo2019partnet}. 
Following \cite{pointnet}, we uniformly sample 1,024 points from the original CAD models as point clouds and normalize them into $[0, 1]^3$.

\vspace{0.3em}
\noindent
\textbf{Models.}
We adopt PointNet \cite{pointnet} as the default surrogate model for PointNCBW. We also take other common models ($e.g.$, PointNet++ \cite{pointnet++}, DGCNN \cite{dgcnn}, and PointASNL with PNL \cite{yan2020pointasnl}) into consideration. All models are trained with default settings suggested by their original papers.

\vspace{0.3em}
\noindent
\textbf{Trigger Design.} To ensure the stealthiness of our watermark, we adopt a trigger of deliberately small size. Specifically, we use a fixed sphere as our shape of trigger $\bm{\Gamma}$. We set its center as $(0.3,0.3,0.3)$ and its radius to 0.025. Our trigger consists of 50 points randomly sampled from this sphere, which takes proportion about $5\%$ of one watermarked sample (1,024 points in total). The examples of point cloud samples involved in different watermarks are shown in Figure \ref{fig:samples}.   

\vspace{0.3em}
\noindent
\textbf{Hyper-parameter.}
In shape-wise TFP, we set the number of starting points $n=30$, the number of iterations $T=30$. We use Adam optimizer \cite{kingma2014adam} to update angles $\theta$ with initial learning rate $l_r=0.025$ and $l_r$ is divided by 10 every 10 steps. In point-wise TFP, we set regularization factor $\eta=50$ in our objective function as default, and during the process of optimization, we set number of iterations $T=20$, step size $\beta=0.0025$, and decay factor $\mu=1$.

\begin{table*}[t]

    \captionsetup{font=small}
  \caption{The effectiveness of point cloud dataset ownership verification via our PointNCBW watermark under the PointNet structure.}
  \vspace{-0.3em}
  \centering
  \scalebox{1.1}{
  \begin{tabular}{c|ccc|ccc|ccc}
    \toprule
     Dataset$\rightarrow$ & \multicolumn{3}{c|}{ModelNet40} & \multicolumn{3}{c|}{ShapeNetPart} & \multicolumn{3}{c}{PartNet}\\
    \hline
    \renewcommand{\arraystretch}{0.7}
   Metric$\downarrow$, Scenario$\rightarrow$ & In-T &  In-M  & Malicious & In-T & In-M  & Malicious & In-T & In-M  & Malicious\\
    \hline
    \renewcommand{\arraystretch}{1.5}
    $\Delta P$ & 0.01 & 0.02 & 0.87 & -0.01 & 0.01 & 0.95 & 0.01 & 0.01 & 0.95\\
 p-value & 1.00 &  1.00 &$10^{-57}$ & 1.00 &  1.00 & $10^{-100}$ & 1.00 & 1.00 & $10^{-71}$\\
    \bottomrule
  \end{tabular}
  }
  \label{table: verification}
\end{table*}

\begin{table*}[t]
    \captionsetup{font=small}
  \caption{The effectiveness of point cloud dataset ownership verification via our PointNCBW watermark under the PointNet++ structure.}
\vspace{-0.3em}
  \centering
  \scalebox{1.1}{
  \begin{tabular}{c|ccc|ccc|ccc}
    \toprule
     Dataset$\rightarrow$ & \multicolumn{3}{c|}{ModelNet40} & \multicolumn{3}{c|}{ShapeNetPart}& \multicolumn{3}{c}{PartNet}\\
    \hline
    \renewcommand{\arraystretch}{0.7}
   Metric$\downarrow$, Scenario$\rightarrow$ & In-T &  In-M  & Malicious & In-T & In-M  & Malicious& In-T & In-M  & Malicious\\
    \hline
    \renewcommand{\arraystretch}{1.5}
    $\Delta P$ & 0.01 & 0.01 & 0.88 & 0.01 & 0.01 & 0.97 & 0.01 & 0.01 & 0.96\\
 p-value & 1.00 &  1.00 &$10^{-60}$ & 1.00 &  1.00 & $10^{-236}$& 1.00 &  1.00 & $10^{-108}$\\
    \bottomrule
  \end{tabular}
  }
  \label{table: verification_pn2}
  \vspace{-1.5em}
\end{table*}

\vspace{0.3em}
\noindent
\textbf{Evaluation Metrics.}
Accuracy (ACC) is used to evaluate the performance of models on benign samples. Watermark success rate (WSR) measures the effectiveness of dataset watermarks. Specifically, WSR is the percentage of verification samples that are misclassified. We use $\Delta P \in [-1,1]$ and p-value $\in[0,1]$ for ownership verification as introduced in Section \ref{Copyright Verification}.

\subsection{The Performance of Dataset Watermarking}
\label{sec:watermark_performance}

\noindent \textbf{Settings.} To evaluate the effectiveness and the stealthiness of watermarks, we compare our PointNCBW with existing poison-label and clean-label backdoor watermarks. The poison-label methods include PCBA \cite{xiang2021backdoor}, PointPBA-Ball \cite{li2021pointba}, PointPBA-Rotation \cite{li2021pointba}, and IRBA \cite{gao2023imperceptible}. We include the only existing clean-label backdoor watermark ($i.e.$, PointCBA \cite{li2021pointba}) as our main baseline. For comparative analysis, we additionally design perturbations utilizing a spherical patch to replace TFP in our PointNCBW. This approach employs a spherical patch with a radius of 0.25, where the center coordinates are optimized to minimize the relative distance between the perturbed samples and the target category (dubbed `PointNCBW-patch').  We randomly select `Keyboard' on ModelNet40, `Knife' on ShapeNetPart, and `Door' on PartNet as our target label. We set the watermarking rate as 0.01. To measure watermark stealthiness, we also calculate the percentage of samples whose label is inconsistent with the ground-truth category on modified samples (dubbed `IoM').

\vspace{0.3em}
\noindent \textbf{Results.} As shown in Table~\ref{watermark performance}, our PointNCBW watermark is significantly more effective than PointCBA, especially on the dataset containing more categories ($i.e.$, ModelNet40). For example, the watermark success rate of PointCBA is more than 50\% higher than that of PointCBA under the PointNet structure on ModelNet40. In particular, the ACC and WSR of our PointNCBW are also on par with those of poison-label watermarks that are not stealthy under human inspection (see the IoM). These results verify the benefits of our PointNCBW.



\subsection{The Performance of Ownership Verification}

\noindent \textbf{Settings.} We evaluate our proposed PointNCBW-based ownership verification in three scenarios, including \textbf{(1)} independent trigger (dubbed `In-T'), \textbf{(2)} independent model (dubbed `In-M'), and \textbf{(3)} unauthorized dataset training (dubbed `Malicious'). In the first scenario, we query the model that is trained on the watermarked dataset with the trigger that is different from the one used in the process of PointNCBW; In the second scenario, we examine the benign model which is trained on clean dataset, using the same trigger pattern in PointNCBW; Additionally, we adopt the same trigger to query the model that is trained on the watermarked dataset in the last scenario. Notice that only the last scenario should be regarded as having unauthorized dataset use. We set $\tau=0.2$ for the hypothesis test in all cases as the default setting.

\vspace{0.3em}
\noindent \textbf{Results.} As shown in Table \ref{table: verification} and Table \ref{table: verification_pn2}, our method can accurately identify unauthorized dataset usage ($i.e.$, `Malicious') with high confidence as $\Delta P$ is larger than 0.8 and p-value is nearly 0. At the same time, it avoids misjudgments when there is no stealing as the $\Delta P$ is nearly 0 and p-values are 1 under the `Independent-T' and `Independent-M' scenarios. These results show that our method can accurately identify dataset stealing without misjudgments under the black-box model access setting.



\subsection{Ablation Study}

We hereby discuss the effects of key hyper-parameters and modules of our PointNCBW. Unless otherwise specified, we exploit ModelNet40 as an example for our discussion.

\vspace{0.3em}
\noindent
\textbf{Effects of Watermarking Rate $\lambda$.}
We conduct an experiment to analyze the relationship between WSR and varying watermarking rates $\lambda$. As depicted in Figure \ref{fig:wsr}a, our WSR can reach above 80\% when watermarking rate is about 0.01. Besides, a higher watermarking rate can bring a better WSR.  

\begin{figure}[!t]
\vspace{-1em}
\centering  
\captionsetup[subfloat]{font=small}
        \subfloat[watermarking rate $\lambda$]{
		\includegraphics[scale=0.25]{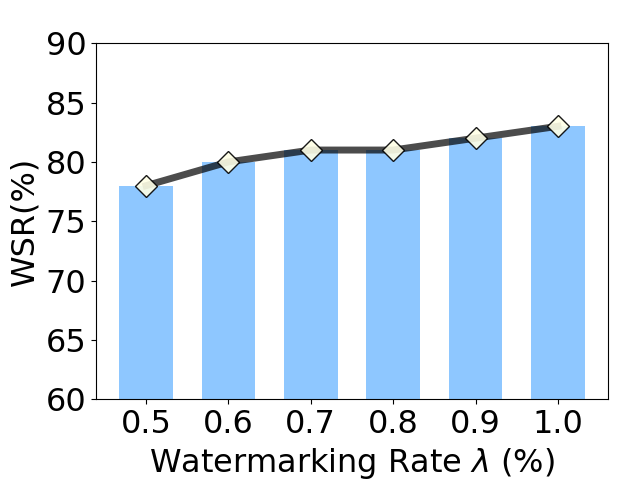}}
	  \hfill
	\subfloat[size of verification set $m$]{
		\includegraphics[scale=0.25]{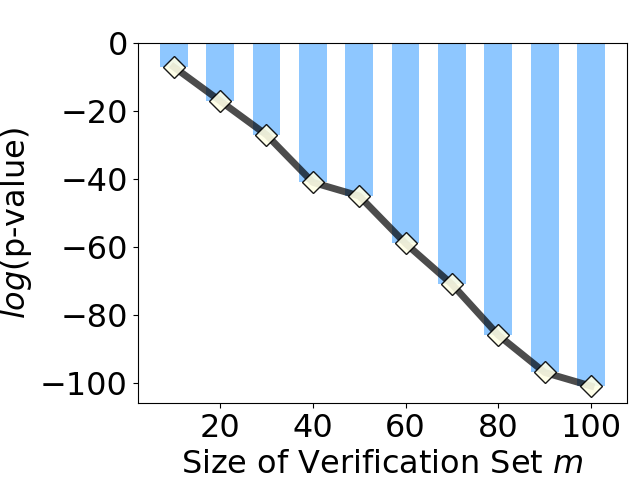}}
  \caption{Effects of watermarking rate $\lambda$ and size of verification set $m$ on the performance of PointNCBW-based dataset ownership verification on the ModelNet40 dataset.}
 \label{fig:wsr}
 \vspace{-1.5em}
\end{figure}

\begin{figure}
    \centering
    \includegraphics[width=0.75\linewidth]{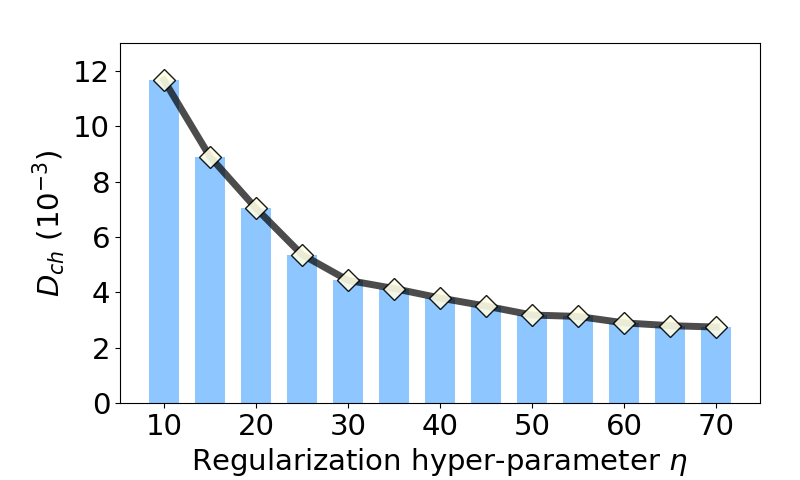}
    \caption{Effects of regularization hyper-parameter $\eta$ on the magnitude of point-wise perturbation measured by Chamfer distance ($D_{ch}$). In general, the smaller the distance, the more imperceptible the point-wise perturbations.}
    \label{fig:ch-reg}
    \vspace{-1.2em}
\end{figure}

\vspace{0.3em}
\noindent
\textbf{Effects of Size of Verification Set $m$.}
As shown in Figure \ref{fig:wsr}b, using more verification samples can significantly enhance the performance of dataset ownership verification, $i.e.$, the p-value for verification becomes smaller as the size of the verification set gets larger. The result is also consistent to our Theorem \ref{them1}.

\vspace{0.3em}
\noindent
\textbf{Effects of Regularization Hyper-parameter $\eta$.}
To evaluate the effects of regularization hyper-parameter $\eta$ in our point-wise perturbation objective function, we use Chamfer Distance \cite{achlioptas2018learning} to measure the magnitude of perturbation related to varying $\eta$. As shown in Figure \ref{fig:ch-reg}, a larger $\eta$ leads to more imperceptible point-wise perturbations. 

\vspace{0.3em}
\noindent
\textbf{Effects of Starting Points Number $n$.} In shape-wise perturbation, we adopt numerous starting points to alleviate the impact of local minima during optimization. To verify its effectiveness, we measure the relative distance (defined in Section \ref{sec: reason_explain}) between perturbed samples and target category with different numbers of starting points during shape-wise optimization. As Figure \ref{fig:startingnum} shows, a larger number of starting points during shape-wise optimization can alleviate the impact of local minima, bringing the optimized perturbed samples closer to the target category with a smaller relative distance.

\begin{figure}
    \centering
    \includegraphics[width=0.88\linewidth]{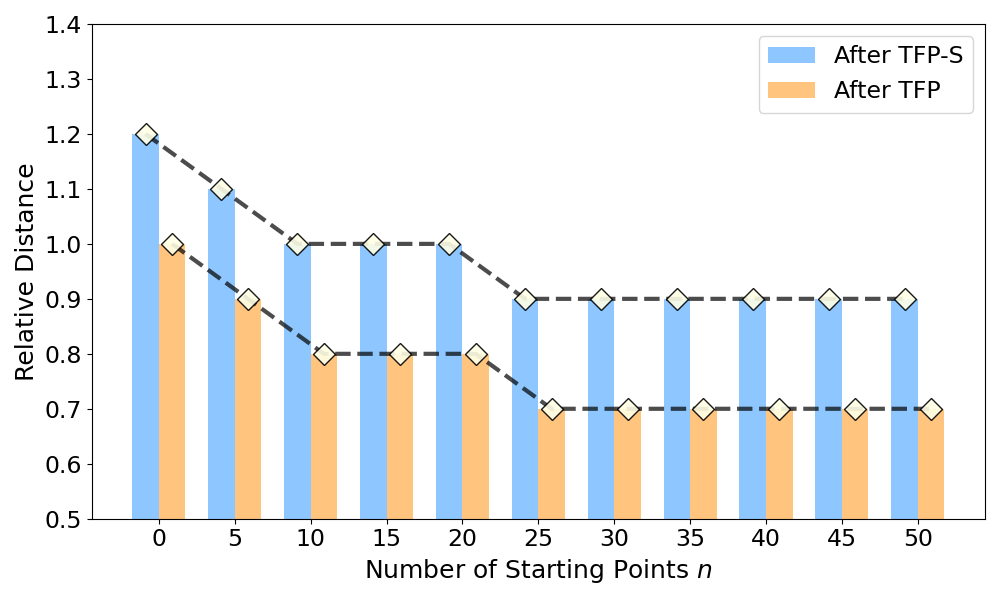}
    \caption{Effects of starting points number during shape-wise perturbation on the relative distance between final perturbed samples and target category.}
    \label{fig:startingnum}
    \vspace{-1.2em}
\end{figure}

\vspace{0.3em}
\noindent\textbf{Effects of Decay Factor $\mu$.}
We hereby analyze the impact of the decay factor $\mu$ in shape-wise optimization on the transferability of PointNCBW. In this study, we employ PointNet as the surrogate model and generate multiple versions of watermarked ModelNet40 dataset. Each of them corresponds to a different decay factor $\mu$ in the shape-wise optimization process. We train various network architectures on these watermarked datasets to assess transferability. As illustrated in Figure \ref{fig:decay}, optimal transferability is achieved when the decay factor $\mu$ is set to $1$. This setting corresponds to a cumulative gradient approach, where all previous gradients are combined to guide the optimization updates in the current iteration.

\begin{figure}
    \centering
    \includegraphics[width=0.88\linewidth]{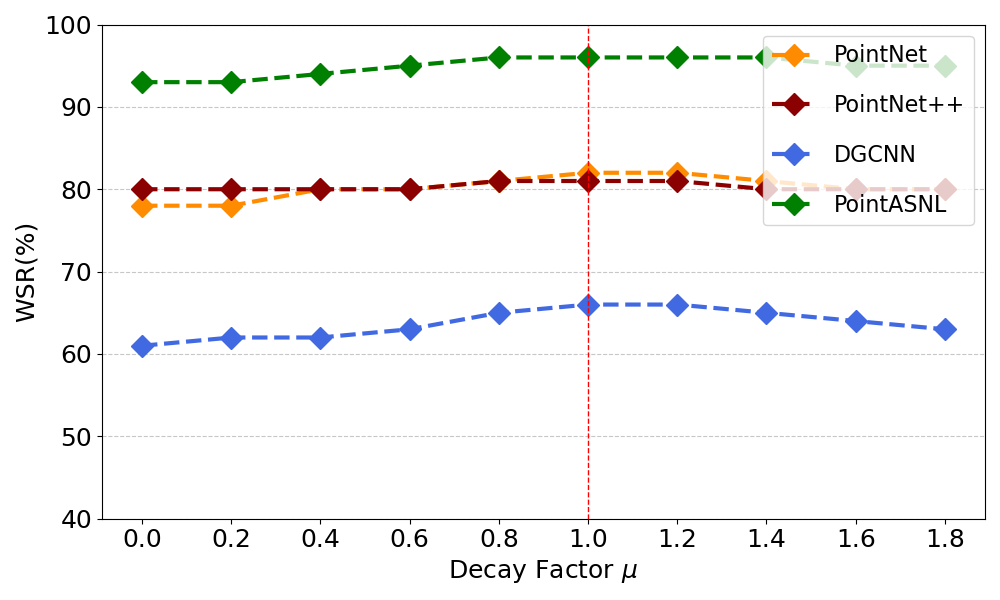}
    \caption{Effects of decay factor $\mu$ in shape-wise optimization on the transferability of PointNCBW when adopting PointNet as the surrogate network structure.}
    \label{fig:decay}
    \vspace{-1.2em}
\end{figure}

\vspace{0.3em}
\noindent
\textbf{Effects of Number of Iterations $T$.}
We assess the impact of the number of iterations $T$ in shape-wise optimization on the transferability of PointNCBW. In this study, we employ PointNet as the surrogate model and generate multiple watermarked datasets, each corresponding to a different number of iterations used in the shape-wise optimization process. Subsequently, we train various network architectures on these watermarked datasets to evaluate transferability.
As shown in Figure~\ref{fig:iter_num}, increasing the number of iterations $T$ in shape-wise optimization generally enhances transferability. However, our findings indicate that $T=20$ iterations are sufficient to achieve optimal transferability, with minimal additional benefits observed beyond this point.

\begin{figure}
    \centering
    \includegraphics[width=0.88\linewidth]{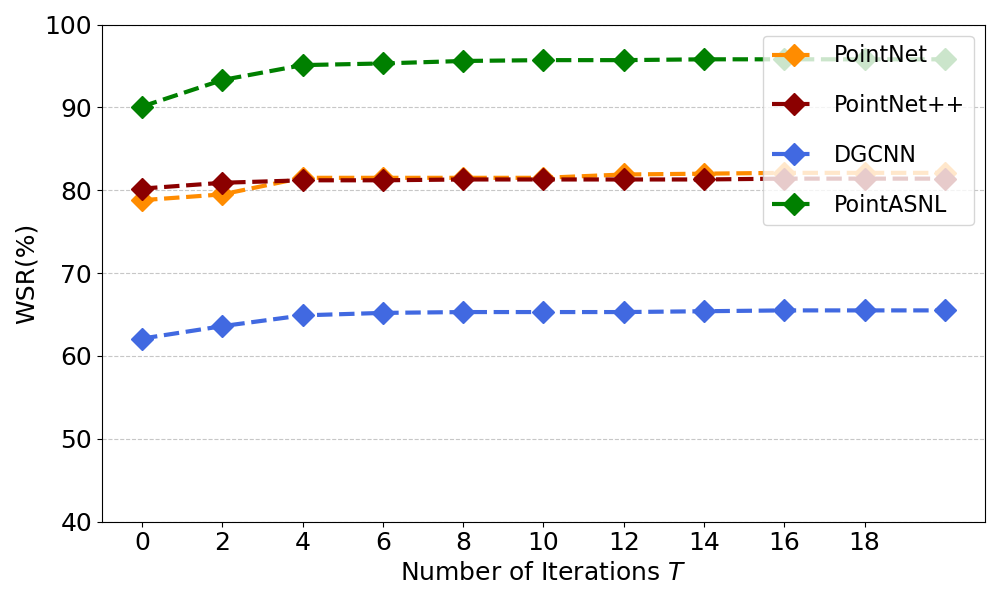}
    \caption{Effects of the number of iterations $T$ in shape-wise optimization on the transferability of PointNCBW when adopting PointNet as the surrogate network structure.}
    \label{fig:iter_num}
\end{figure}

\vspace{0.3em}
\noindent
\textbf{Effects of Verification Certainty $\tau$.}
To measure the effects of verification certainty $\tau$ in PointNCBW-based dataset ownership verification, we choose different values of the $\tau$ for ownership verification after the same watermarking process. As shown in Table \ref{tab:tau}, the p-value increases as the verification certainty $\tau$ increases across all scenarios. In particular, when $\tau$ is smaller than 0.05, our proposed PointNCBW may misjudge the cases of Independent-T or Independent-M.  In addition, the larger the $\tau$, the more unlikely the misjudgments happen and the more likely that the dataset stealing is ignored. Therefore, $\tau$ should be assigned based on specific requirements.

\begin{table}[!t]
  \captionsetup{font=small}
\caption{The p-value of PointNCBW-based dataset ownership verification $w.r.t.$ the verification certainty $\tau$ on  ModelNet40 dataset. In particular, we mark failed verification cases in \red{red}.}
  \vspace{-0.3em}
  \centering
\scalebox{1.0}{
    \begin{tabular}{c|ccccc}
    \toprule
        Scenario$\downarrow$, $\tau\rightarrow$ & 0.00 & 0.05 & 0.10 & 0.15 & 0.20 \\
        \hline
         Independent-T    & \red{$10^{-7}$} & 0.98 & 1.0 & 1.0 & 1.0  \\
         Independent-M&  \red{$10^{-11}$} & 0.47 & 1.0 & 1.0 &  1.0\\
         Malicious&   $10^{-71}$ & $10^{-68}$ & $10^{-65}$ & $10^{-60}$ & $10^{-57}$ \\
         \bottomrule
    \end{tabular}
    }
    
    \label{tab:tau}
    \vspace{-1.5em}
\end{table}

\vspace{0.3em}
\noindent\textbf{Effects of Target Category $y^{(t)}$.} 
We hereby discuss whether our PointNCBW is still effective under various target labels. We randomly choose one class as the target label while keeping all other settings unchanged. The results, as presented in Table~\ref{table targetclass}, demonstrate that although the watermark performance exhibits slight variations across different target classes, our PointNCBW consistently maintains high effectiveness.

\begin{table*}[!t]
    \captionsetup{font=small}
  \caption{The performance of our PointNCBW with different target classes under the PointNet structure.} 
 \vspace{-0.3em}
  \centering
   \scalebox{1.1}{
  \begin{tabular}{c|ccccc|ccccc}
    \toprule

Dataset$\xrightarrow{}$  & \multicolumn{5}{c|}{ModelNet40} & \multicolumn{5}{c}{ShapeNetPart} \\
    \hline
   target class$\xrightarrow{}$ & Bottle& Curtain& Door & Guitar&Wardrobe &Lamp  & Guitar & Pistol& Rocket& Skateboard\\
    
    \hline
    ACC & 88.7 & 88.8 & 88.7 & 88.7 & 88.8& 98.3 & 98.5 &98.3 &98.4&98.4\\
    WSR & 80.2 & 79.5 & 81.3 & 83.2 & 79.7&90.5& 95.2 & 90.8 &92.9& 93.6\\
    p-value & $10^{-61}$ & $10^{-56}$& $10^{-57}$& $10^{-66}$& $10^{-56}$& $10^{-78}$&$10^{-116}$&$10^{-81}$& $10^{-97}$& $10^{-105}$\\
    \bottomrule
  \end{tabular}
  }
  \label{table targetclass}
\end{table*}

\begin{table}[!t]
   \captionsetup{font=small}
\caption{The performance of PointNCBW with different trigger patterns used in the watermarking process.}
\vspace{-0.3em}
  \centering
\scalebox{1}{
    \begin{tabular}{c|ccc}
    \toprule
        Trigger Setting$\downarrow$, Metrics$\rightarrow$ & ACC & \textbf{WSR} & p-value\\
        \hline
        Original & 88.7 & 82.1 & $10^{-57}$ \\ \hline
        (a) & 88.7 & 83.0 & $10^{-57}$\\
        (b) & 88.6 & 85.1 & $10^{-60}$ \\
        (c) & 88.7 & 82.8  &  $10^{-55}$\\
        (d) & 88.7& 80.4 &$10^{-49}$ \\
    \bottomrule
    \end{tabular}
    }
    \label{tab:trigger}
\end{table}

\vspace{0.3em}
\noindent
\textbf{Effects of Trigger Patterns $\bm{\Gamma}$.} We conduct experiments on the ModelNet40 dataset to discuss the effects of trigger patterns in our PointNCBW. Specifically, we hereby discuss four main trigger settings, including \textbf{(a)} trigger pattern with different shapes, \textbf{(b)} trigger pattern with different sizes, \textbf{(c)} trigger pattern on different positions, and \textbf{(d)} trigger pattern with different number of points. In the first scenario, we sample 50 points from a cube centered at (0.3, 0.3, 0.3) with a side length of 0.05. In the second scenario, we sample 50 points from a sphere also centered at (0.3, 0.3, 0.3), but with a radius of 0.05. In the third scenario, we relocate the same default trigger to (0.3,0.3,0.3). In the last scenario, we only sample 20 points from the same sphere center at (0.3, 0.3, 0.3) with a radius of 0.025. The example of watermarked point clouds is shown in Figure \ref{fig:triggers}. As shown in Table \ref{tab:trigger}, by comparing the results of setting (a) \& (c), we know that both the shape and position of the trigger pattern used for watermarking have mild effects on the final performance. Besides, the results of setting (b) suggest that a larger trigger size leads to better watermarking and verification performance, although it may decrease the watermark's stealthiness. Furthermore, the results of setting (d) demonstrate the watermark performance may slightly decrease if the trigger contains fewer points. Nevertheless, our method obtains promising verification results across all settings.

\vspace{0.3em}
\noindent
\textbf{Effects of Feature Perturbation.}
Our TFP contains shape-wise and point-wise perturbations.  To verify effectiveness of TFP, we watermark the ModelNet40 dataset following the process of our PointNCBW under six scenarios, including \textbf{(1)} no perturbation before inserting trigger, 
\textbf{(2)} TFP with solely shape-wise part (TFP-S), \textbf{(3)} TFP with solely point-wise part (TFP-P), \textbf{(4)} TFP without momentum in shape-wise optimization (TFP-NM) and \textbf{(5)} the vanilla TFP proposed in this paper. After the processes of watermarking, we train different networks on the watermarked ModelNet40 to measure the performance of ownership verification. As shown in Table~\ref{table fp}, 
both shape-wise and point-wise perturbations are critical for the watermark and the verification performance of our PointNCBW based on TFP.

\begin{table*}[!t]
    \captionsetup{font=small}
  \caption{The performance of PointNCBW-based dataset copyright verification with different perturbations on ModelNet40.} 
  \vspace{-0.3em}
  \centering
    \scalebox{1.1}{
  \begin{tabular}{c|ccc|ccc|ccc|ccc}
 
    \toprule

      Model$\rightarrow$  & \multicolumn{3}{c|}{PointNet} & \multicolumn{3}{c|}{PointNet++} & \multicolumn{3}{c|}{DGCNN} & \multicolumn{3}{c}{PointASNL} \\
    
    \hline
    Variant$\downarrow$ & ACC & \textbf{WSR} & p-value & ACC & \textbf{WSR} & p-value & ACC & \textbf{WSR} & p-value &ACC & \textbf{WSR} & p-value \\

    \hline
    None &89.2 & 0.0 & 1.00  & 91.9 &0.0 & 1.00 & 91.7& 0.0 & 1.00 & 90.1 & 0.0& 1.00 \\
  TFP-P & 89.0 & 18.4& 0.01 & 91.7& 6.2 & 0.99 & 91.5 & 2.1& 1.00 & 89.9 & 10.6 & 0.37 \\
    TFP-S &88.8 & 78.8 & $10^{-45}$ & 91.7 & 80.2 & $10^{-49}$ & 91.1 & 62.1& $10^{-46}$ & 89.2 & 90.1 & $10^{-71}$\\
    TFP-NM &88.7 & 81.3&$10^{-55}$ & 91.5& 80.4 & $10^{-55}$ & 91.0 & 62.4 & $10^{-56}$ & 89.2 & 93.2 & $10^{-95}$\\
    Ours &88.7 & 82.1 & \bm{$10^{-57}$} & 91.5 & 81.4 & \bm{$10^{-58}$} & 91.0 &65.5 & \bm{$10^{-61}$} & 89.2 & 95.8 & \bm{$10^{-100}$} \\
    \bottomrule
  \end{tabular}
  }
  \label{table fp}
\end{table*}

\begin{table*}[t]
    \captionsetup{font=small}
  \caption{The transferability performance of our PointNCBW and PointNCBW-based dataset watermarking and ownership verification with different surrogate and training model structures on ModelNet40.} 
  \vspace{-0.3em}
  \centering
   \scalebox{1.1}{
  \begin{tabular}{c|ccc|ccc|ccc|ccc}
    \toprule

Training$\xrightarrow{}$  & \multicolumn{3}{c|}{PointNet} & \multicolumn{3}{c|}{PointNet++} & \multicolumn{3}{c|}{DGCNN} & \multicolumn{3}{c}{PointASNL} \\
    \hline
   Surrogate$\downarrow$ & ACC & \textbf{WSR} & p-value & ACC & \textbf{WSR} & p-value & ACC & \textbf{WSR} & p-value &ACC & \textbf{WSR} & p-value \\
    
    \hline
    PointNet &88.7 & 82.1 & $10^{-57}$ & 91.5 & 81.4 & $10^{-58}$ & 91.0 & 65.5 & $10^{-61}$ & 89.2 & 95.8 & $10^{-100}$ \\
    PointNet++& 88.8 & 81.4 & $10^{-59}$ & 91.3& 92.4 & $10^{-89}$ & 90.9 & 64.8 & $10^{-62}$ & 88.9& 100 & $10^{-267}$\\
    DGCNN & 89.2 & 72.4& $10^{-69}$& 91.1 & 79.3 & $10^{-47}$& 91.0 & 88.2 & $10^{-83}$ & 89.8 & 84.8 & $10^{-74}$\\
    PointASNL & 88.9 &95.2 & $10^{-120}$ & 91.2 &100 & $10^{-221}$ & 90.1 &68.9& $10^{-93}$ &89.2 &100 & $10^{-213}$\\
    \bottomrule
  \end{tabular}
  }
  \label{table transferability}
  
\end{table*}

\begin{table*}[!t]
    \captionsetup{font=small}
  \caption{The performance of PointNCBW-based dataset ownership verification under transfer-learning.} 
    \vspace{-0.3em}
  \centering
    \scalebox{1.1}{
  \begin{tabular}{c|ccc|ccc|ccc|ccc}
 
    \toprule
      Fintune$\rightarrow$  & \multicolumn{3}{c|}{ModelNet} & \multicolumn{3}{c|}{ModelNet-Part (75\%)} & \multicolumn{3}{c|}{ModelNet-Part (50\%)} & \multicolumn{3}{c}{ModelNet-Part (25\%)}\\
    \hline
    Pretrain$\downarrow$ & ACC & \textbf{WSR} & p-value & ACC & \textbf{WSR} & p-value & ACC & \textbf{WSR} & p-value & ACC & \textbf{WSR} & p-value \\
    \hline
    ShapeNetPart &88.8 & 61.3 & $10^{-36}$  &88.4 & 55.5 & $10^{-30}$ & 85.3 & 48.3 & $10^{-21}$ & 83.5 & 38.1 &  $10^{-19}$\\
    PartNet & 88.7 & 53.1 & $10^{-29}$ & 88.4 & 43.2 &$10^{-24}$ & 86.5 & 34.6 & $10^{-15}$ & 83.8 & 25.1 & $10^{-8}$  \\
  
    \bottomrule
  \end{tabular}
  }
  \label{table transferlearn}
\end{table*}

\begin{figure}[!t]
\centering  
\captionsetup[subfloat]{font=small}
        \subfloat{
		\includegraphics[scale=0.5]{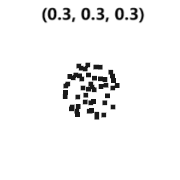}}
	  \hfill
   \addtocounter{subfigure}{-1}
	\subfloat[]{
		\includegraphics[scale=0.5]{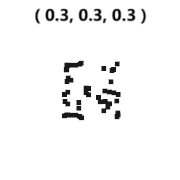}}
        \hfill
	\subfloat[]{
		\includegraphics[scale=0.5]{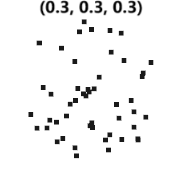}}
        \hfill
	\subfloat[]{
		\includegraphics[scale=0.5]{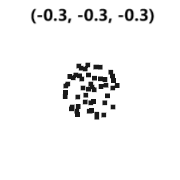}}
        \hfill
	\subfloat[]{
		\includegraphics[scale=0.5]{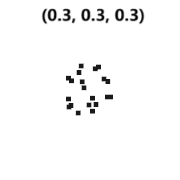}}
  \caption{The different trigger patterns used in ablation study. In particular, we have also marked the coordinates of the center point of the trigger at the top of each image.}
 \label{fig:triggers}
\end{figure}

\subsection{The Model Transferability of PointNCBW}
\label{Transferability}

Recall that our PointNCBW requires a surrogate model $g$ to generate feature perturbations, as illustrated in Eq.~(\ref{equ:perturbation general obj}). In our main experiments, we test our method under the same model structure used for generating the PointNCBW-watermarked dataset. However, the suspicious model may have a different structure compared to the one used for dataset generation in practice, since the dataset owner lacks information about the model used by dataset users. In this section, we verify that our method has model transferability and, therefore, can be used to protect dataset copyright.

\vspace{0.3em}

\noindent \textbf{Settings.} 
We exploit PointNet \cite{pointnet}, PointNet++ \cite{pointnet++}, DGCNN \cite{dgcnn}, and PointASNL \cite{yan2020pointasnl} on the ModelNet40 dataset for discussion. Specifically, we first use one of them as the surrogate model to generate the PointNCBW-watermarked dataset. After that, we also use one of them as the training model structure to train the malicious model on the generated dataset. We report the watermark and copyright verification performance of our PointNCBW on these trained models.

\vspace{0.3em}
\noindent \textbf{Results.} As shown in Table \ref{table transferability}, our method remains highly effective even when the training model is different from the surrogate one in all cases. Training networks, including PointNet \cite{pointnet}, PointNet++ \cite{pointnet++}, and PointASNL \cite{yan2020pointasnl} exhibit both high WSR and low p-value. Although training network DGCNN \cite{dgcnn} may lead to a relatively low WSR, it is still highly effective for copyright verification (\ie, p-value $\ll 0.01$). These results verify the transferability of our method. We also observe that the accuracy of TFP is relatively lower compared to other variants. This is primarily because TFP reduces the relative distance between selected samples and the target category in feature space more than the other variants, causing the trained model to become more confused about the true classification decision boundary.

\subsection{The Scalability of PointNCBW}
\label{sec Scalability}

As we demonstrated in our introduction, the performance of the only existing clean-label backdoor watermark (\ie, PointCBA \cite{li2021pointba}) is not \emph{scalable}, where its watermark performance will significantly decrease when datasets contain many categories. This limitation prevents its application as a watermark for large-scale point cloud datasets. In this section, we verify the scalability of our PointNCBW.

\vspace{0.3em}
\noindent \textbf{Settings.} We construct a series of subsets of different sizes of the original ModelNet40 dataset by randomly selecting samples from various numbers of categories ModelNet40. After that, we watermark them through our PointNCBW and train a PointNet under the same settings used in Section \ref{sec:watermark_performance}.

\vspace{0.3em}
\noindent \textbf{Results.} We compare the scalability between PointCBA and our PointNCBW in Figure \ref{fig:scalable}. The results indicate that the WSR of PointCBA is significantly degraded when the dataset contains more categories. For example, the WSR drops even below 40\% when the number of categories exceeds 30. In contrast, our PointNCBW can maintain a high WSR ($>80\%$) with the increase of the number of categories. These results verify the effectiveness of our negative trigger design proposed in PointNCBW for its scalability.


\begin{figure}[!t]
\vspace{-1em}
\centering  
\captionsetup[subfloat]{font=small}
        \subfloat[PointCBA]{
		\includegraphics[scale=0.2]{figures/wsr_pointcba.png}}
	  \hfill
	\subfloat[PointNCBW (Ours)]{
	\includegraphics[scale=0.2]{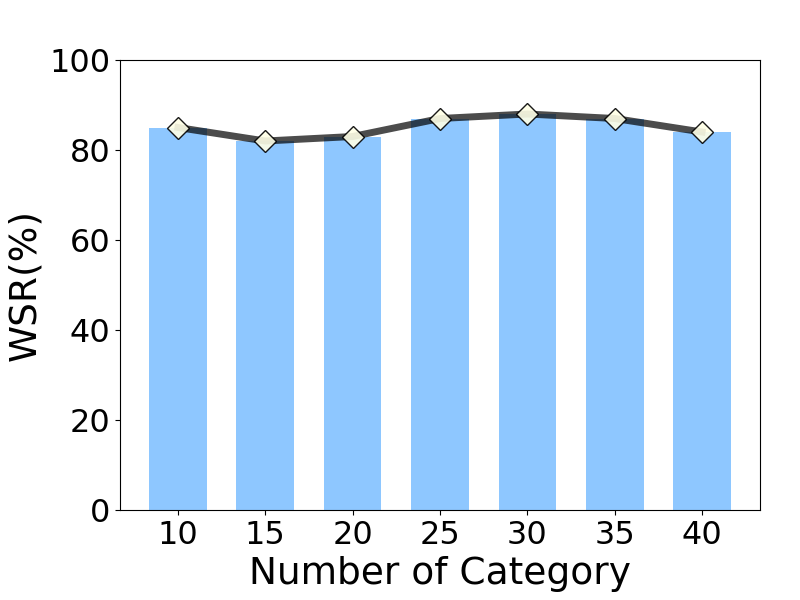}}
  \caption{The scalability comparison between PointCBA and PointNCBW. It is measured by the watermark success rate $w.r.t.$ the number of categories contained in the victim dataset.}
 \label{fig:scalable}
 \vspace{-1.5em}
\end{figure}

\subsection{PointNCBW under Transfer Learning}
In previous parts, we focused only on scenarios where adversaries directly exploit the protected dataset to train a model from scratch. As transfer learning \cite{zhuang2020comprehensive} becomes increasingly common in model training, we discuss whether our method is still effective in the transfer learning scenario.

\vspace{0.3em}
\noindent \textbf{Settings.} We hereby conduct experiments on ShapeNetPart \cite{shapenet}, PartNet \cite{mo2019partnet}, and ModelNet40 \cite{modelnet} datasets with PointNet \cite{pointnet} as the model architecture. We implement a two-stage training process. Specifically, we first pre-train PointNet on ShapeNetPart and PartNet datasets. After that, we fine-tune these pre-trained models using our watermarked ModelNet40 dataset. This fine-tuning is conducted on both the complete ModelNet40 dataset and its various subsets with different sample proportions. Unless otherwise specified, all settings are the same as those stated in Section \ref{sec:exp_settings}. 

\vspace{0.3em}
\noindent \textbf{Results.} As shown in Table \ref{table transferlearn}, although the watermark success rate (WSR) of our PointNCBW may decrease under transfer learning conditions, it remains highly effective for dataset copyright verification. We note that the partial ModelNet40 dataset may not be adequate for the model to fully capture the distinguishing features of different categories, leading to comparatively lower accuracy for PointNet when trained on ModelNet subsets. For instance, the accuracy on ModelNet is 88.8\%, while on ModelNet-Part (25\%) it is 83.5\%. Besides, we observe that our method has better performance on models pre-trained on the PartNet dataset demonstrate greater compared to those pre-trained on ShapeNetPart, as the model pre-trained on PartNet achieves a relatively higher WSR than the one pre-trained on ShapeNetPart. This phenomenon may be attributed to the larger scale of PartNet, which enables the acquisition of more comprehensive prior knowledge during pre-training.

\subsection{Stealthiness and Watermark Performance}

Compared to poison-label backdoor methods, our proposed PointNCBW is more hidden and harder to detect by humans, as it maintains the consistency between point clouds and their labels. However, we acknowledge that the watermark created by PointNCBW may still be detectable under extremely detailed human inspection. However, we believe that in practice, detailed detection is nearly impossible on large-scale datasets because of the high cost and time required.
In this section, we analyze the trade-off between the stealthiness and performance of the watermark in our PointNCBW method.

\vspace{0.3em}
\noindent \textbf{Settings.} To control the magnitude of perturbation, we implement a systematic approach utilizing two key parameters: varying the starting points $n$ in shape-wise perturbation optimization and the regularization hyper-parameter $\eta$ in point-wise perturbation for the generation of watermarked samples. In our experimental setup, we employ PointNet as the surrogate model. We then generate a series of watermarked versions of the ModelNet40 dataset based on our PointNCBW, each corresponding to different combinations of $n$ and $\eta$ values. Subsequently, we train separate instances of PointNet on these watermarked datasets to evaluate the watermark performance under varying perturbation conditions.

\vspace{0.3em}
\noindent \textbf{Results.} As evidenced by the results presented in Table~\ref{tab: wsr_stealth}, an increased magnitude of perturbation, characterized by larger values of $n$ and lower values of $\eta$, correlates with enhanced watermark performance. Visual representations of the watermarked samples are provided in Figure \ref{Fig wsr_stealth} for qualitative assessment. It is noteworthy that larger values of $n$ may still remain inconspicuous, as the resultant rotations are inherently subtle. Conversely, lower values of $\eta$ result in more noticeable perturbations, as evidenced by the increasing Chamfer Distance ($D_{ch}$) \cite{achlioptas2018learning} between the samples after shape-wise perturbation and their subsequent point-wise perturbed counterparts. The choice of perturbation magnitude depends on the specific needs and constraints of the dataset owner.

\begin{figure*}[!t]
    \centering
    \includegraphics[width=0.9\linewidth]{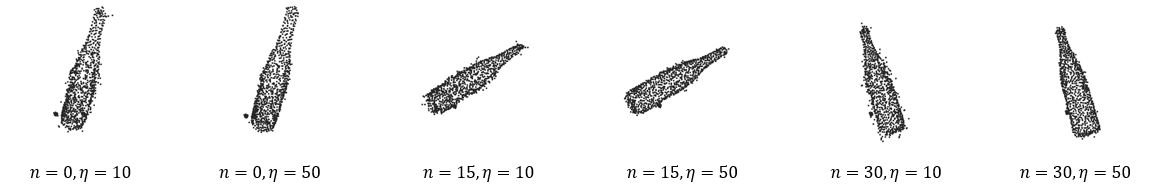}
    \caption{The watermark sample of 'Bottle' generated by varying magnitude of perturbation.}
    \label{Fig wsr_stealth}
\end{figure*}

\begin{table}[!t]
  \captionsetup{font=small}
\caption{The performance of PointNCBW with varying magnitudes of the perturbation.}
\vspace{-0.3em}
  \centering
\scalebox{1.0}{
    \begin{tabular}{c|cccccc}
    \toprule
        n$\rightarrow$ & 0 & 0 & 15 & 15 & 30 & 30\\
        \hline
        $\eta$$\rightarrow$ & 10 &50  & 10 & 50 & 10 &50 \\
        \hline
          WSR & 26.2 & 17.9 & 67.5 & 58.0 & 86.2 & 82.1\\
         p-value & $10^{-28}$ & $10^{-19}$ & $10^{-45}$ & $10^{-38}$ & $10^{-63}$ & $10^{-57}$\\
         $D_{ch}$ & 0.012 & 0.004 & 0.011 & 0.004 & 0.012 & 0.004\\
         \bottomrule
    \end{tabular}
    }
    \label{tab: wsr_stealth}
\end{table}

\subsection{The Resistance to Potential Watermark-removal Attacks}

Once malicious dataset users learn that the owners may watermark their dataset via our PointNCBW, they may design watermark-removal attacks to bypass our watermark. This section exploits ModelNet40 as an example to evaluate whether our method resists them. We consider the most widely used and representative watermark-removal methods, including data augmentation \cite{krizhevsky2012imagenet}, outlier detection \cite{rousseeuw2005robust}, and model fine-tuning \cite{devlin2018bert}, for discussion. We also design an adaptive method to further evaluate it under the setting that adversaries know our watermarking method but do not know specific settings.

\vspace{0.3em}
\noindent
\textbf{Data Augmentation.} Data augmentation is a widely used technique to enhance the diversity and richness of training data by applying random transformations or expansions to the original data. It aims to improve the generalization ability and performance of models. 
Our data augmentation methods consist of \textbf{(1)} randomly rotating the point cloud sample alongside the Eluer angles ranging ($-180\degree$, $180\degree$) and \textbf{(2)} adding Gaussian noise with mean $\mu=0$, variance $\sigma=0.01$ to point cloud sample. The results in Table \ref{table resist} demonstrate that our PointNCBW can resist common augmentation methods. 

\vspace{0.3em}
\noindent
\textbf{Outlier Removal.} Statistical outlier removal aims to identify and eliminate data points that significantly deviate from the expected or typical pattern in a dataset. Outliers are notably distant from the majority of the data, which may be used to detect trigger patterns. We perform statistical outlier removal (SOR) \cite{rousseeuw2005robust} on the generated watermarked dataset. Specifically, we compute the average distance for a given point using its 20 nearest neighbors and set a threshold based on the standard deviation of these average distances. It can be observed from Figure \ref{fig:sor} that no matter the threshold ranging from 0.5 to 2.0, the SOR fails to detect and remove our trigger sphere. This is mostly because the density of the trigger pattern is approximate with or less than the remaining parts.

\begin{figure}[!t]
\centering  
 \captionsetup[subfloat]{font=small}
        \subfloat[threshold=0.5]{
		\includegraphics[scale=0.25]{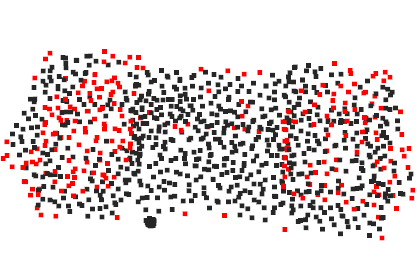}}
  \hspace{2.8em}
	\subfloat[threshold=1.0]{
		\includegraphics[scale=0.25]{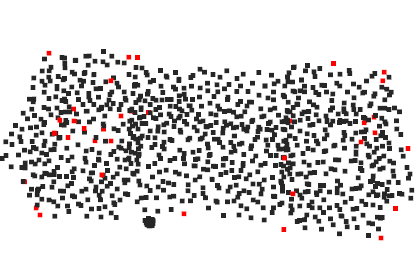}}
  
  \subfloat[threshold=1.5]{
		\includegraphics[scale=0.25]{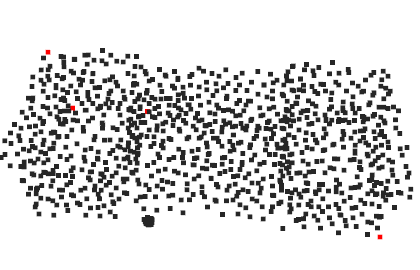}}
     \hspace{2.8em}
        \subfloat[threshold=2.0]{
		\includegraphics[scale=0.25]{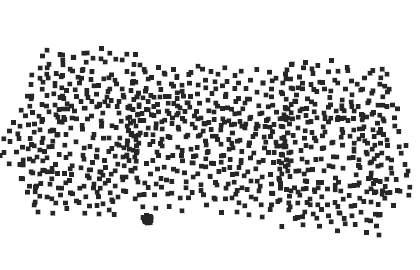}}
	\caption{The results of statistical outlier removal on a watermarked sample. The detected outlier points are marked in red.}
 \label{fig:sor}
 \vspace{-0.8em}
\end{figure}

\vspace{0.3em}
\noindent
\textbf{Fine-tuning.} We hereby evaluate the PointNCBW-watermarked model after fine-tuning, which is also a common strategy for removing potential watermarks. Initially, we train PointNet on the watermarked ModelNet40 dataset for 200 epochs. Subsequently, we randomly select 20\% of the benign samples to continue training the model for an additional 200 epochs. As shown in Table \ref{table resist}, our method is still highly effective (p-value $\ll 0.01$), although FT can slightly increase the p-value.

\vspace{0.3em}
\noindent
\textbf{Adaptive Adversarial Removal.} 
We assume that malicious dataset users have prior knowledge of the existence of the PointNCBW watermark in the dataset. Specifically, they understand how our PointNCBW works but lack access to the exact target labels, the trigger pattern, and the specific watermarked samples employed by the dataset owners in PointNCBW. In this scenario, we design an adaptive removal method that might be used by malicious dataset users. The truth is that our PointNCBW relies much on feature perturbation, and we have experimentally proved that the closer feature distance between selected non-target samples and target category can lead to better watermark performance in Section \ref{sec: reason_explain}, and the reverse is also true. Consequently, we design to disentangle the features of each sample adversarially during the training process. Specifically, we train PointNet \cite{pointnet} on watermarked ModelNet40 \cite{modelnet} with 200 epochs, and on every 10 epochs, we rotate each sample in the training dataset to bring its feature away from the current feature as far as possible. Let $\bm{x}_{tr}$ denotes one sample during training phase, we rotate $\bm{x}_{tr}$ with rotation matrix $S(\theta')$, where 
\begin{equation}
\begin{aligned}
\label{adaptive}
\theta' = \mathop{\arg\max}\limits_{\theta'}\ {\mathcal{E}(g_f(\bm{x}_{tr}\cdot S(\theta')),g_f(\bm{x}_{tr})).}
\end{aligned}
\end{equation}
We approximately optimize Eq.~(\ref{adaptive}) in a method similar to Algorithm \ref{alg:sfe}, except that the $\theta'$ is updated in the same direction as the gradient. The results in Table \ref{table resist} show that the adaptive method is significantly more effective compared to other watermark-removal attacks. However, our method is still highly effective with a high WSR and low p-value. In other words, our PointNCBW-based dataset ownership verification is also resistant to this adaptive attack.

\begin{table*}[t!]
    \captionsetup{font=small}
  \caption{The performance of our PointNCBW under different potential watermark-removal attacks on the ModelNet40 dataset.}
\vspace{-0.3em}
\centering
\scalebox{1.0}{
  \begin{tabular}{c|c|c|c|c|c|c}
  
    \toprule
   Metric$\downarrow$, Method$\rightarrow$ & No Attack & Augmentation (Rotation) & Augmentation (Noise) & Outlier Removal & Fine-tuning & Adaptive Removal\\
   \hline
  ACC & 88.7& 88.2& 88.4 & 88.7 & 88.8 & 88.5\\
  WSR & 82.1& 78.2 & 79.4 & 82.0 & 81.5 & 75.6 \\
  p-value & $10^{-57}$ & $10^{-41}$& $10^{-45}$ & $10^{-57}$ & $10^{-55}$ &$10^{-39}$\\
  \bottomrule
   
  \end{tabular}
  }
  \label{table resist}
\end{table*}

\begin{figure}[!t]
\vspace{-1em}
	\centering  
 \captionsetup[subfloat]{font=small}
	\subfloat[before watermarking]{
		\includegraphics[scale=0.25]{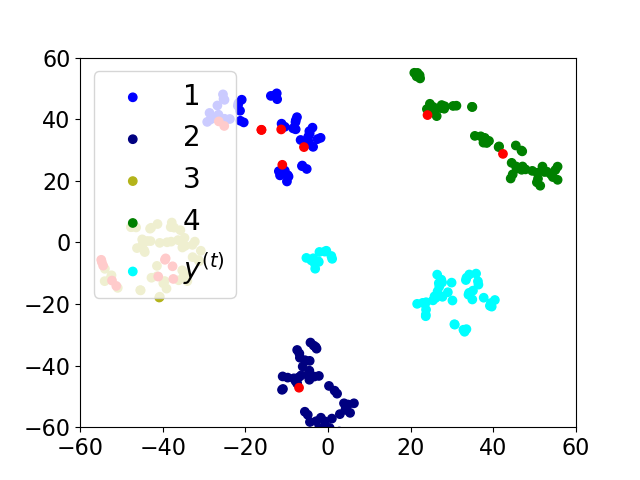}}
	\subfloat[after watermarking]{
		\includegraphics[scale=0.25]{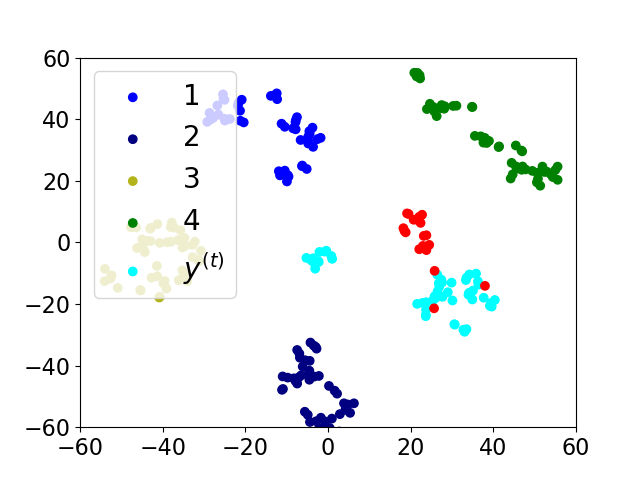}}
\caption{The t-SNE visualization. \textbf{(a)} and \textbf{(b)} depict the feature space of samples during our PointNCBW. The selected samples for modification are marked in red.}
 \label{fig: feature}
 \vspace{-1em}
\end{figure}

\begin{figure}
    \centering
    \includegraphics[width=0.7\linewidth]{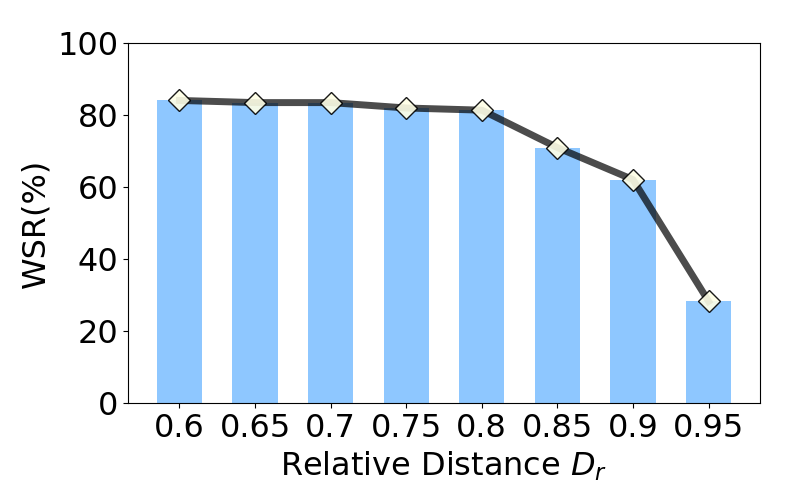}
    \caption{Effects of the relative distance between the modified samples and the target label on WSR.}
    \label{fig: wsr_dr}
    \vspace{-0.8em}
\end{figure}

\subsection{Why Is Our PointNCBW Highly Effective?}

\label{sec: reason_explain}

To investigate why our method is highly effective, we first visualize the features of samples before and after watermarking. Specifically, we randomly select some samples from five different categories, including $y^{(t)}$, and project their features into 2D space by the t-SNE method \cite{van2008visualizing}. As shown in Figure \ref{fig: feature}, features of selected samples (marked in red) were distributed over all categories before watermarking. In contrast, they move closer to category $y^{(t)}$ after our PointNCBW. Based on the feature shift, the trained models can discover that all these samples share the same part ($i.e.$, trigger $\bm{\Gamma}$). Accordingly, the models will explain the reason why these samples have similar features as category $y^{(t)}$, but different true labels might be attributed to the existence of trigger $\bm{\Gamma}$. Consequently, the trained model will interpret our trigger as one key component to deny predicting label $y^{(t)}$. 

For further study, we also calculate the relative distance (\ie, $D_r$) between watermarked sample $\bm{x}_m$ and target samples in feature space, as follows:
\begin{equation}
\label{relative distance}
D_r=\frac{\Vert \Bar{g_f}(x_m)-\Bar{g_f}(x_t) \Vert_2}{\Vert \Bar{g_f}(x_t)\Vert_2},
\end{equation}
where $\Bar{g_f}$ is mean of feature representations. As shown in Figure \ref{fig: wsr_dr}, the WSR increases with the decrease of $D_r$. It verifies the effectiveness of our TFP.

Besides, we hereby also present more examples of watermarked samples of our PointNCBW to further verify our watermark stealthiness, as shown in Figure \ref{fig:samples_more}. Note that dataset owners can also exploit other trigger patterns to further increase the stealthiness based on the characteristics of their victim dataset. It is out of the scope of this paper.

\begin{figure*}[!t]
    \centering
    \includegraphics[width=0.8\linewidth]{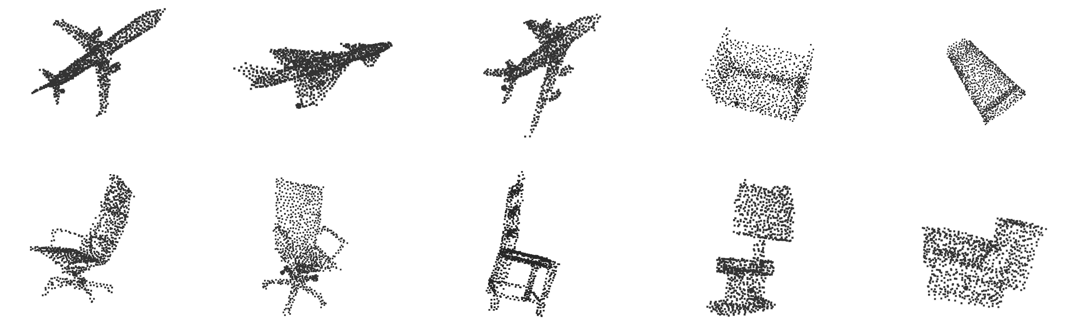}
    \vspace{0.2em}
    \caption{The example of watermarked samples generated by our PointNCBW.}
    \vspace{-1em}
    \label{fig:samples_more}
\end{figure*}

\subsection{The Analysis of Computational Complexity}
In this section, we analyze the computational complexity of our PointNCBW. We hereby discuss the computational complexity of dataset watermarking and verification, respectively.

\vspace{0.3em}
\noindent \textbf{The Complexity of Dataset Watermarking.} Let $N$ denote the size of the original dataset, our computational complexity is $\mathcal{O}(\lambda\cdot N)$ since PointNCBW only needs to watermark a few selected samples ($N$ 
 times watermarking rate $\lambda$). Besides, PointNCBW supports parallel processing, as the TFP on the selected samples is independent. For example, to apply PointNCBW-based watermark on each selected sample when using PointNet as the surrogate model, it takes about one second and 1GB memory on NVIDIA GeForce RTX 3090. As such, the additional time required by our method in the watermarking phase is negligible to a large extent. 

\vspace{0.3em}
\noindent \textbf{The Complexity of Dataset Verification.} After obtaining the API of one suspicious black-box model, we use $m$ verification samples to test whether the model is trained on our watermarked dataset. The computational complexity of this step is $\mathcal{O}(m)$, and it also supports parallel processing. For example, it takes about 50ms to use 100 verification samples (in batch mode) for ownership verification. Accordingly, the additional time our method requires in the verification phase is also negligible to a large extent.

\section{Potential Limitations and Future Directions}

In our PointNCBW, we design transferable feature perturbation (TFP) to bring selected non-target samples closer to the target category in the feature space. We have empirically proven that our PointNCBW-based verification succeeds only when the feature distances between selected perturbed non-target samples and the target category are close enough. However, TFP heavily relies on rotation transformation and is effective only on models that exhibit sensitivity to rotation. Current point cloud models are sensitive to geometric transformations, primarily due to the use of max-pooling operations \cite{pointnet, pointnet++, dgcnn, yan2020pointasnl}, which contributes to the efficacy of our PointNCBW. Nevertheless, our approach could be less effective if future point cloud model architectures achieve greater robustness to rotation transformations. 
To address this, we aim to develop adaptive perturbation mechanisms that do not rely on rotational sensitivity. We will investigate more powerful mechanisms in our future work.

Secondly, our work is primarily centered on digital environments, whereas point cloud models are increasingly likely to be deployed in complex real-world scenarios, such as autonomous driving. Verification in such real-world contexts may pose greater challenges, as the placement and scale of the trigger may not be optimally configured. We will further explore it in our future work.

Thirdly, our PointNCBW will implant distinctive yet stealthy backdoor behaviors into the trained model. Similar to all existing dataset ownership verification (DOV) methods, the embedded backdoors may be maliciously used by adversaries, although our PointNCBW intends to protect dataset copyright instead of for attack. However, since we adopt a negative trigger pattern that aims only to misclassify samples from the target class, PointNCBW has minor threats due to its untargeted nature and limited misclassification behaviors. We will explore how to design purely harmless DOV methods for point clouds in our future work.

\section{Conclusion}
In this paper, we conducted the first attempt to protect the copyright of point cloud datasets by backdoor-based dataset ownership verification (DOV). We revealed that existing backdoor watermarks were either conspicuous or not scalable to large datasets due to their positive trigger effects. To alleviate this problem, we proposed a simple yet effective clean-label backdoor watermark for point clouds by introducing negative trigger effects. Specifically, we performed transferable feature perturbation (TFP) on non-target samples before implanting the trigger, aiming to bring perturbed non-target samples closer to the target category in feature space. Consequently, the trained model will view our trigger as a signal to deny predicting the target category. We also designed a hypothesis-test-guided dataset ownership verification via our watermark and provided its theoretical analyses. We hope our work provides a new angle for creating backdoor watermarks to facilitate trustworthy dataset sharing and trading.

\section*{Acknowledgment}
This research is supported in part by the National Key Research and Development Program of China under Grant 2021YFB3100300, the National Natural Science Foundation of China under Grants (62072395 and U20A20178), and the Key Research and Development Program of Zhejiang under Grant 2024C01164. This work was mostly done when Yiming Li was a Research Professor at the State Key Laboratory of Blockchain and Data Security, Zhejiang University. He is currently at Nanyang Technological University.

\bibliographystyle{plain}
\bibliography{main}

\appendix


\setcounter{theorem}{0}

\begin{theorem}\label{them1_ap}
Suppose $f$ generates the posterior probability by a suspicious model. Let variable $X$ denote the benign sample from the target class $y^{(t)}$ and variable $X'$ is its verified version. Let $P_b = f(X)_{y^{(t)}}$ and $P_v = f(X')_{y^{(t)}}$ denote the predicted probability of $X$ and $X'$ on $y^{(t)}$. Assume that $P_b > \zeta$, we claim that dataset owners can reject the null hypothesis $H_0$ at the significance level $\alpha$, if the watermark success rate $W$ of $f$ (with $m$ verification samples) satisfies:

\begin{equation}
    \sqrt{m-1} \cdot (W + \zeta - \tau - 1) - t_{\alpha} \cdot \sqrt{W-W^2} > 0,
\end{equation}
where $t_{\alpha}$ is $\alpha$-quantile of t-distribution with $(m-1)$ degrees of freedom.
\end{theorem}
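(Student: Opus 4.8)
The plan is to reduce the claim to a deterministic lower bound on the one-sided paired $t$-statistic and then show that the stated inequality forces this statistic above the rejection threshold. First I would set up the test concretely: for the $m$ verification pairs $(X_i, X_i')$ define the per-sample difference $D_i = f(X_i)_{y^{(t)}} - f(X_i')_{y^{(t)}} = P_{b,i} - P_{v,i}$, so that $H_0: \mathbb{E}[D] = \tau$ is tested against $H_1: \mathbb{E}[D] > \tau$ by the statistic $t = \sqrt{m}\,(\bar{D} - \tau)/s_D$, where $\bar{D}$ and $s_D$ are the sample mean and Bessel-corrected sample standard deviation of $\{D_i\}$. Rejection at level $\alpha$ occurs exactly when $t > t_\alpha$. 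The goal then becomes: under the constraints imposed by $P_b > \zeta$ and by the watermark success rate $W$, produce a guaranteed lower bound on $t$ and compare it with $t_\alpha$.

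Next I would translate the hypotheses into per-sample bounds on $D_i$. Since $P_{b,i} > \zeta$ and $P_{v,i} \le 1$ always, every difference satisfies $D_i > \zeta - 1$. For the $Wm$ verification samples on which the watermark succeeds (i.e., $X_i'$ is misclassified away from $y^{(t)}$, so its target-class confidence collapses, $P_{v,i} \to 0$), we instead get $D_i \ge \zeta$. This yields the extremal two-point configuration that I expect to be the worst case: $D_i = \zeta$ for a fraction $W$ of samples and $D_i = \zeta - 1$ for the remaining fraction $1-W$. I would then argue that this configuration minimizes $t$ over the feasible region: driving every $P_{b,i}$ down to $\zeta$ shifts all $D_i$ uniformly, lowering $\bar{D}$ while leaving $s_D$ invariant (variance is shift-invariant), and pushing each $P_{v,i}$ to its extreme ($0$ on success, $1$ on failure) simultaneously minimizes the numerator; the Bernoulli-type two-point law is what makes $\sqrt{W-W^2}$ appear.

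With the extremal configuration fixed, the remaining work is a direct computation: the two-point distribution has sample mean $\bar{D} = W + \zeta - 1$ and sample variance $s_D^2 = \tfrac{m}{m-1}\,W(1-W)$, so that $t = \sqrt{m-1}\,(W + \zeta - 1 - \tau)/\sqrt{W - W^2}$. The rejection criterion $t > t_\alpha$ then rearranges immediately into $\sqrt{m-1}\,(W + \zeta - \tau - 1) - t_\alpha\sqrt{W - W^2} > 0$, which is exactly the stated condition; since the true statistic is at least this worst-case value, $H_0$ is rejected whenever the inequality holds.

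I expect the main obstacle to be the extremality argument in the second step rather than the algebra in the third. Because the $t$-statistic is a ratio whose numerator and denominator both depend on the configuration of $\{(P_{b,i}, P_{v,i})\}$, it is not a priori obvious that simultaneously extremizing the coordinates minimizes the ratio; the mean direction is clean by shift-invariance, but one must verify that the two-point law indeed maximizes the spread for the given success fraction and that no interior configuration yields a smaller $t$. I would handle this by fixing $P_{b,i} \equiv \zeta$ first (worst for the numerator, neutral for the denominator) and then optimizing over the admissible $P_{v,i}$, where the Bernoulli law maximizes variance at fixed mean among $[0,1]$-valued points, closing the bound.
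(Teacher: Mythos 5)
Your proposal follows essentially the same route as the paper's proof: both replace $P_b$ by its lower bound $\zeta$ and $P_v$ by the $0$--$1$ indicator of whether the trigger failed, reduce the paired differences to the two-point sample $\{\zeta,\ \zeta-1\}$ with success fraction $W$, obtain sample mean $W+\zeta-\tau-1$ and Bessel-corrected variance $\tfrac{m}{m-1}(W-W^2)$, and rearrange $t>t_\alpha$ into the stated inequality. The one caveat is that your ``worst-case'' framing overstates the step $D_i\ge\zeta$ on watermark success (misclassification only bounds $P_{v,i}$ below the top-class probability, not at $0$), but the paper makes the identical idealization by silently identifying $1-P_v$ with $1-E_i$, so this is a shared looseness rather than a gap relative to the paper.
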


\begin{proof}
Since $\boldsymbol{P}_b>\zeta$, the original hypothesis $H_1$ can be converted to

\begin{equation}
H_1^{\prime}:\left(1-P_v\right)+(\zeta-\tau-1)>0 .
\end{equation}

Suppose $C$ is the classifier of $f$, \textit{i.e.}, $C=\arg \max f$. Let $E$ denote the event of whether the suspect model $f$ predicts a watermark sample as the target label $y^{(t)}$. As such, $E \sim B(1, p)$, where $1-p=1-\operatorname{Pr}\left(C\left(\boldsymbol{X}^{\prime}\right)=y^{(t)}\right)=\operatorname{Pr}\left(C\left(\boldsymbol{X}^{\prime}\right) \neq y^{(t)}\right)$ indicates the verification success probability and $B$ is the Binomial distribution.

Let $\hat{\boldsymbol{x}}_1, \cdots, \hat{\boldsymbol{x}}_m$ denotes $m$ watermarked samples used for dataset verification via our PointNCBW and $E_1, \cdots, E_m$ denote their prediction events, we know that the $W$ satisfies
\begin{equation}
\begin{aligned}
& W=\frac{1}{m} \sum_{i=1}^m\left(1-E_i\right), \\
& W \sim \frac{1}{m} B(m, 1-p) .
\end{aligned}
\end{equation}

According to the central limit theorem, the watermark success rate $W$ follows Gaussian distribution $\mathcal{N}\left(1-p, \frac{p(1-p)}{m}\right)$ when $m$ is sufficiently large. Similarly, $\left[\left(1-P_v\right)+(\zeta-\tau-1)\right]$ also satisfies Gaussian distribution. Accordingly, we can construct the $t$-statistic as follows:
\begin{equation}
\begin{aligned}
T \triangleq \frac{\sqrt{m} \cdot (W+\zeta-\tau-1)}{s} \sim t(m-1),
\label{eq:1thrm}
\end{aligned}
\end{equation}
where $s$ is the standard deviation of $W+\zeta-\tau-1$ and $W$, \ie,
\begin{equation}
\begin{aligned}
s^2=\frac{1}{m-1} \sum_{i=1}^m\left(E_i-W\right)^2=\frac{1}{m-1}\left(m \cdot W-m \cdot W^2\right) .
\label{eq:2thrm}
\end{aligned}
\end{equation}

To reject $H_0$ at the significance level $\alpha$, we have:
\begin{equation}
\frac{\sqrt{m} \cdot(W+\zeta-\tau-1)}{s}>t_\alpha,
\end{equation}
where $t_\alpha$ is the $\alpha$-quantile of $t$-distribution with $(m-1)$ degrees of freedom. According to Eq.~(\ref{eq:1thrm})\&Eq.~(\ref{eq:2thrm}), we have
\begin{equation}
\sqrt{m-1} \cdot(W+\zeta-\tau-1)-t_\alpha \cdot \sqrt{W-W^2}>0.
\end{equation}
\end{proof}

\begin{IEEEbiography}[{\includegraphics[width=1in,height=1.25in,clip,keepaspectratio]{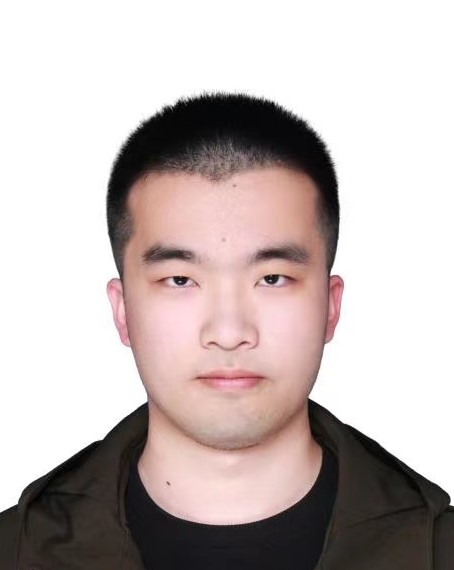}}]{Cheng Wei} received the bachelor’s degree from the
School of Computer Science and Technology, Harbin Engineering University in 2022. He is currently pursuing a Master's degree with the School of Cyber Science and Technology, at Zhejiang University. His research interests include trustworthy AI and computer vision.
\end{IEEEbiography}

\begin{IEEEbiography}[{\includegraphics[width=1in,height=1.25in,clip,keepaspectratio]{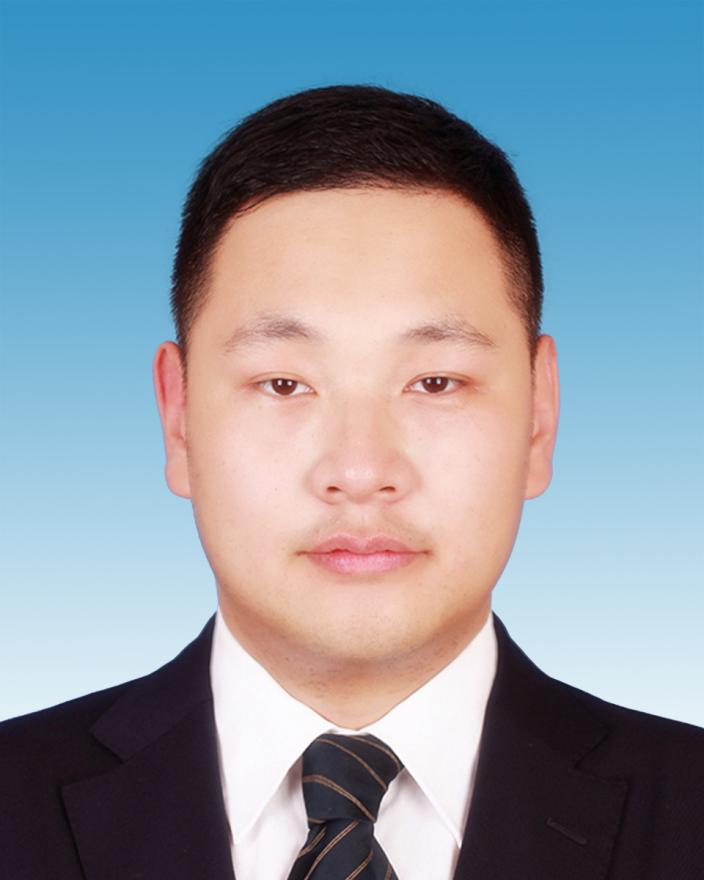}}]{Yang Wang} received his B.S. degree in communication engineering from Zhejiang University and Master degree in System on chip from KTH. He is currently a senior engineer in Hangzhou Institute for Advanced Study, UCAS, and working toward the PhD degree in Zhejiang University. His research interests include data security and privacy.
\end{IEEEbiography}

\begin{IEEEbiography}[{\includegraphics[width=1in,height=1.25in,clip,keepaspectratio]{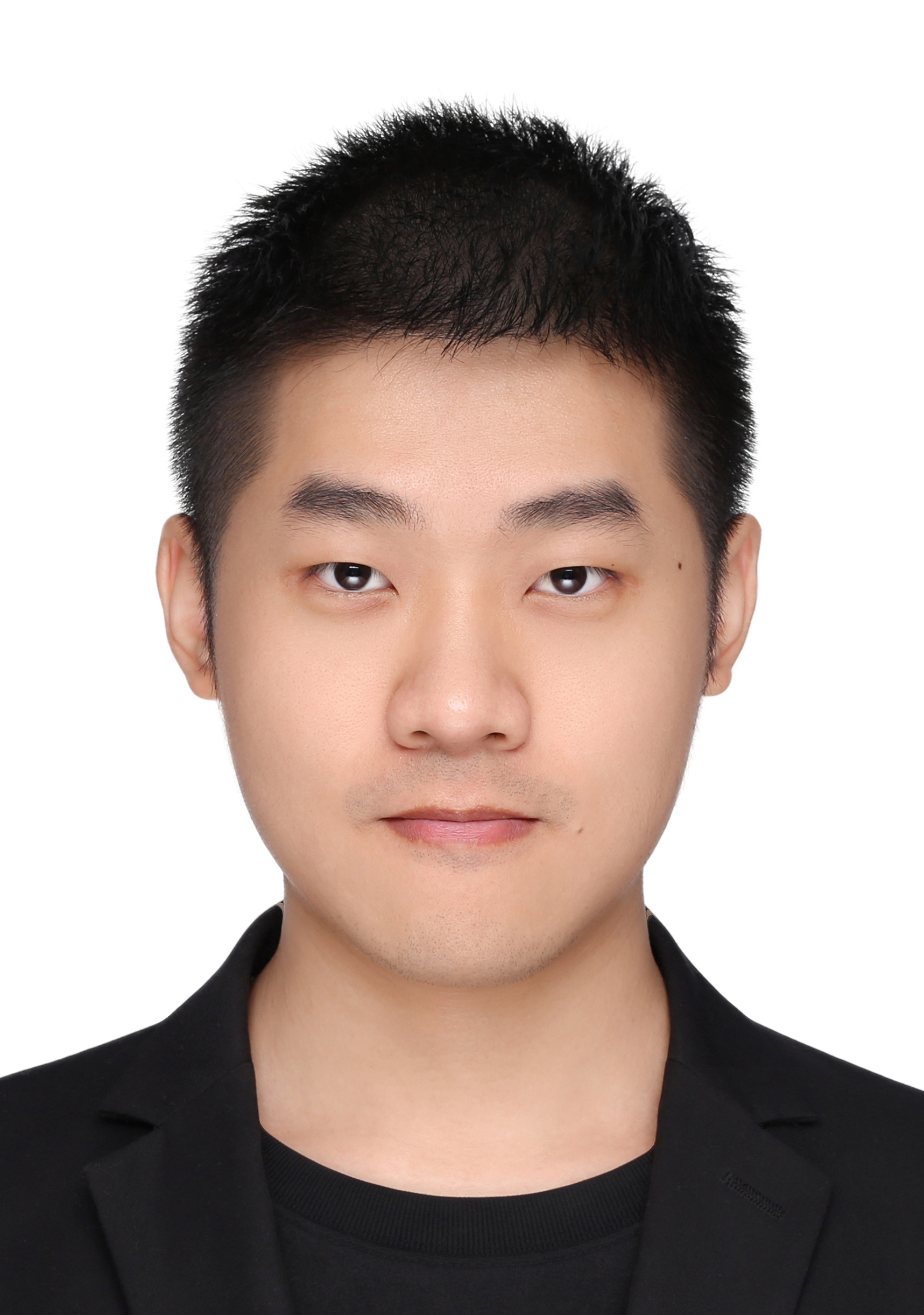}}]{Kuofeng Gao} received the bachelor’s degree from the School of Cyber Science and Engineering, Wuhan University, Wuhan, China, in 2021. He is currently pursuing the Ph.D. degree with the Shenzhen International Graduate School, Tsinghua University, Guangdong, China. His research interest generally includes trustworthy and responsible AI.\end{IEEEbiography}

\begin{IEEEbiography}[{\includegraphics[width=1in,height=1.25in,clip,keepaspectratio]{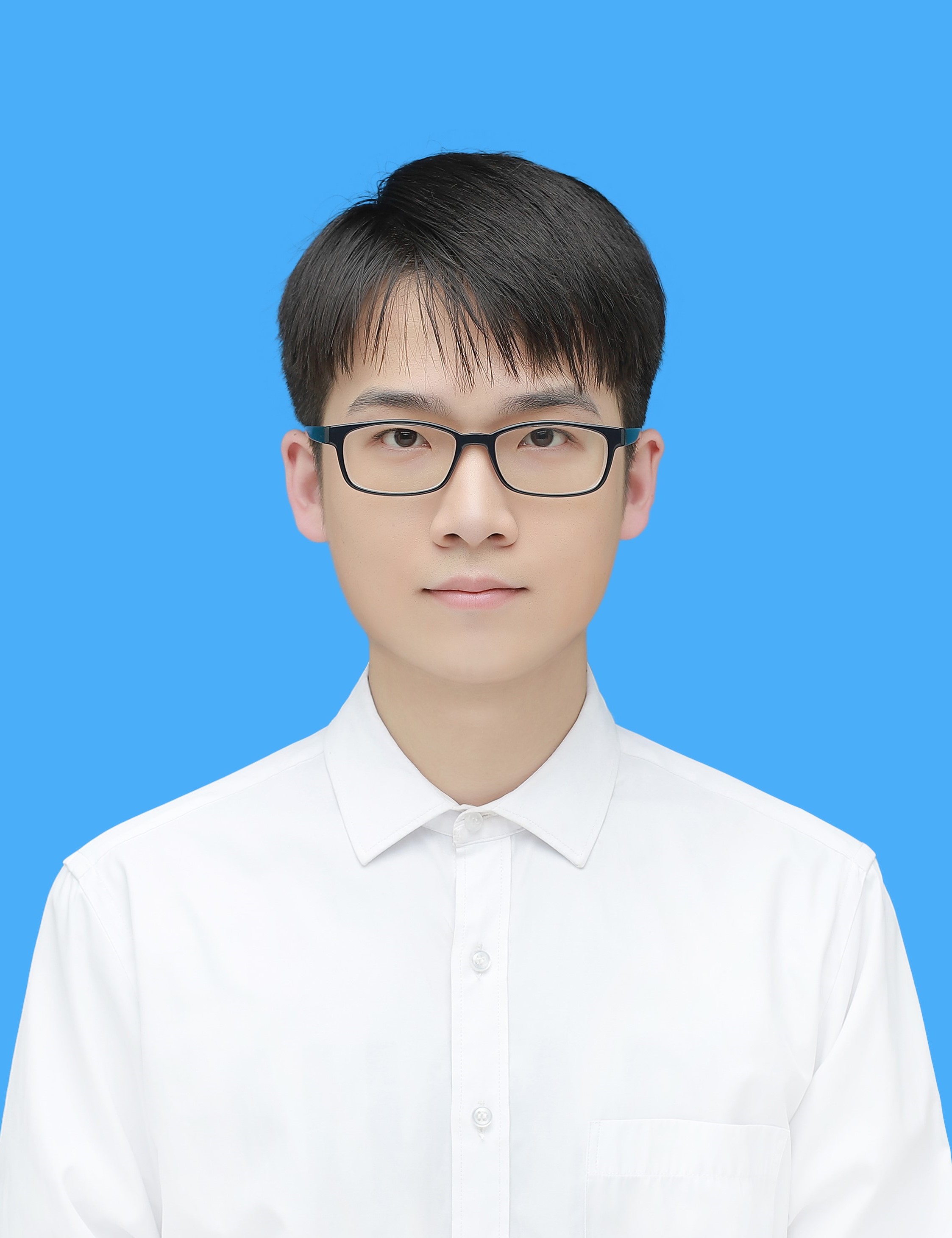}}]{Shuo Shao} received the B.Eng. degree from the
School of Computer Science and Technology, Central South University in 2022. He is currently pursuing a Ph.D. degree with the School of Cyber
Science and Technology and the State Key Laboratory of Blockchain and Data Security, Zhejiang University. His
research interests include copyright protection in AI, backdoor attack and defense, and AI security.
\end{IEEEbiography}

\begin{IEEEbiography}[{\includegraphics[width=1in,height=1.25in,clip,keepaspectratio]{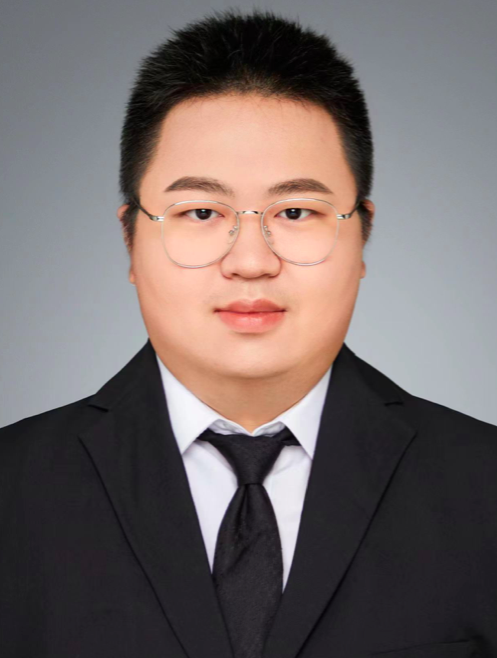}}]{Yiming Li} (Member, IEEE) received the B.S. degree
(Hons.) in mathematics from Ningbo University in 2018 and the Ph.D. degree (Hons.) in computer science and technology from Tsinghua University in 2023. He is currently a Research Fellow with the College of Computing and Data Science, Nanyang Technological University. Before that, he was a Research Professor with the State Key Laboratory of Blockchain and Data Security, Zhejiang University, and HIC-ZJU. His research has been published in multiple top-tier conferences and journals, such as IEEE S\&P, NDSS, ICLR, NeurIPS, ICML, and IEEE TRANSACTIONS ON INFORMATION FORENSICS AND SECURITY. His research interests include the domain of trustworthy ML and responsible AI. His research has been featured by major media outlets, such as IEEE Spectrum. He was a recipient of the Best Paper Award at PAKDD in 2023 and the Rising Star Award at WAIC in 2023. He served as the Area Chair for ACM MM and the Senior Program Committee Member for AAAI.
\end{IEEEbiography}

\begin{IEEEbiography}[{\includegraphics[width=1in,clip,keepaspectratio]{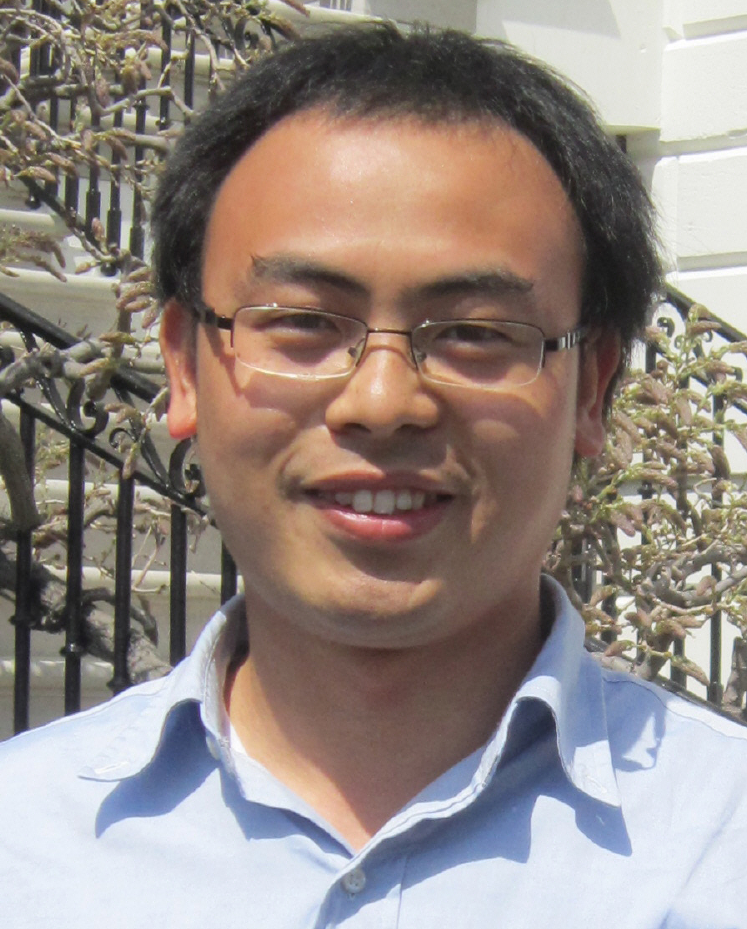}}]{Zhibo Wang}
received the B.E. degree in Automation from Zhejiang University, China, in 2007, and his Ph.D degree in Electrical Engineering and Computer Science from University of Tennessee, Knoxville, in 2014. He is currently a Professor with the School of Cyber Science and Technology, Zhejiang University, China. His currently research interests include AI security, Internet of Things, data security and privacy. He is a Senior Member of IEEE and ACM.
\end{IEEEbiography}

\begin{IEEEbiography}[{\includegraphics[width=1in,height=1.25in,clip,keepaspectratio]{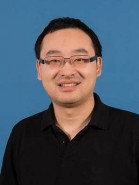}}]{Zhan Qin} is currently a ZJU100 Young Professor with both the College of Computer Science and the School of Cyber Science and Technology at Zhejiang University, China. He was an assistant professor at the Department of Electrical and Computer Engineering in the University of Texas at San Antonio after receiving the Ph.D. degree from the Computer Science and Engineering department at State University of New York at Buffalo in 2017. His current research interests include data security and privacy, secure computation outsourcing, artificial intelligence security, and cyber-physical security in the context of the Internet of Things. His works explore and develop novel security sensitive algorithms and protocols for computation and communication on the general context of Cloud and Internet devices. He is the associate editor of IEEE TDSC.
\end{IEEEbiography}

\end{document}